\date{}
\author{
\IEEEauthorblockN{Nicolas Derumigny\IEEEauthorrefmark{1}\IEEEauthorrefmark{3},
Théophile Bastian\IEEEauthorrefmark{1},
Fabian Gruber\IEEEauthorrefmark{1},
Guillaume Iooss\IEEEauthorrefmark{1}\\
Christophe Guillon\IEEEauthorrefmark{2},
Louis-No{\"e}l Pouchet\IEEEauthorrefmark{3},
Fabrice Rastello\IEEEauthorrefmark{1}
}

\IEEEauthorblockA{\IEEEauthorrefmark{1}
   	Univ. Grenoble Alpes, Inria, CNRS, Grenoble INP, LIG,
    38000 Grenoble, France
}

\IEEEauthorblockA{\IEEEauthorrefmark{2}
    STMicroelectronics, France
}

\IEEEauthorblockA{\IEEEauthorrefmark{3}
    Colorado State University,
    Fort Collins, Colorado, USA
}
}
\newcommand\ipc[1]{{\mathop{\overline{#1}}}}
\def\Q{\mathbb{Q}}
\def\muop{$\mu$OP\xspace}
\def\muops{$\mu$OPs\xspace}
\def\basicInst{\mathcal{I}_{\mathcal{B}}}
\def\load{\normalfont\textrm{load}\xspace}
\def\Kernels{\mathcal{K}}
\def\Resources{\mathcal{R}}
\def\Instructions{\mathcal{I}}
\definecolor{resourcegray}{rgb}{0.4, 0.4, 0.4}
\newcommand{\resourcelabel}[1]{{\small\textcolor{resourcegray}{(#1)}}}
\definecolor{darkgreen}{rgb}{0.0, 0.5, 0.13}
\def\yes{\textcolor{darkgreen}{\ding{51}}}
\def\no{\textcolor{red}{\ding{55}}}
\newcommand{\na}{{\color{gray}N/A}}
\newcommand{\centerheader}[1]{\multicolumn{1}{c}{#1}}
\newcommand{\figfiverow}[2]{
        \begin{minipage}[c]{0.03\linewidth}
            \figrowlegend{#1}
        \end{minipage}\figspaceleft{}
        \begin{minipage}[c]{0.13\linewidth}
            \includegraphics[width=\linewidth]{images/ipc_heatmaps/#2-palmed.png}
        \end{minipage}\figspacemid{}
        \begin{minipage}[c]{0.13\linewidth}
            \includegraphics[width=\linewidth]{images/ipc_heatmaps/#2-UOPS.png}
        \end{minipage}\figspacemid{}
        \begin{minipage}[c]{0.13\linewidth}
            \includegraphics[width=\linewidth]{images/ipc_heatmaps/#2-PMEvo.png}
        \end{minipage}\figspacemid{}
        \begin{minipage}[c]{0.13\linewidth}
            \includegraphics[width=\linewidth]{images/ipc_heatmaps/#2-IACA.png}
        \end{minipage}\figspacemid{}
        \begin{minipage}[c]{0.13\linewidth}
            \includegraphics[width=\linewidth]{images/ipc_heatmaps/#2-LLVM_MCA.png}
        \end{minipage}\figspaceright{}
        \begin{minipage}[c]{0.035\linewidth}
            \includegraphics[width=\linewidth]{images/ipc_heatmaps/#2-legend.png}
        \end{minipage}
    }
\newcommand{\figthreerow}[2]{
        \begin{minipage}[c]{0.03\linewidth}
            \figrowlegend{#1}
        \end{minipage}\figspaceleft{}
        \begin{minipage}[c]{0.15\linewidth}
            \includegraphics[width=\linewidth]{images/ipc_heatmaps/#2-palmed.png}
        \end{minipage}\figspacezenmid{}
        \begin{minipage}[c]{0.15\linewidth}
            \includegraphics[width=\linewidth]{images/ipc_heatmaps/#2-PMEvo.png}
        \end{minipage}\figspacezenmid{}
        \begin{minipage}[c]{0.15\linewidth}
            \includegraphics[width=\linewidth]{images/ipc_heatmaps/#2-LLVM_MCA.png}
        \end{minipage}\figspaceright{}
        \begin{minipage}[c]{0.035\linewidth}
            \includegraphics[width=\linewidth]{images/ipc_heatmaps/#2-legend.png}
        \end{minipage}
    }
\newcommand{\figthreerowlegend}[2]{
        \begin{minipage}[c]{0.03\linewidth}
            \figrowlegend{#1}
        \end{minipage}\figspaceleft{}
        \begin{minipage}[c]{0.15\linewidth}
            \includegraphics[width=\linewidth]{images/ipc_heatmaps/#2-palmed.png}
        \end{minipage}\figspacezenmid{}
        \begin{minipage}[c]{0.12\linewidth}
            \includegraphics[width=\linewidth]{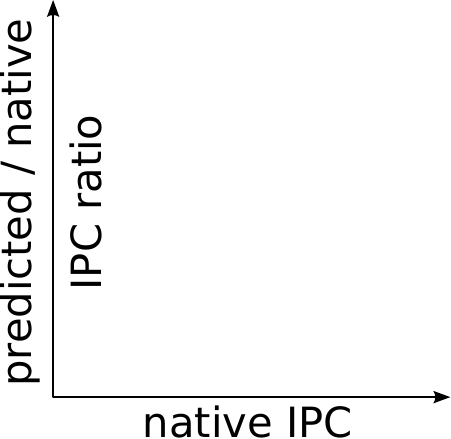}
        \end{minipage}\figspacezenmid{}
        \begin{minipage}[c]{0.15\linewidth}
            \includegraphics[width=\linewidth]{images/ipc_heatmaps/#2-PMEvo.png}
        \end{minipage}\figspacezenmid{}
        \begin{minipage}[c]{0.12\linewidth}
            ~
        \end{minipage}\figspacezenmid{}
        \begin{minipage}[c]{0.15\linewidth}
            \includegraphics[width=\linewidth]{images/ipc_heatmaps/#2-LLVM_MCA.png}
        \end{minipage}\figspaceright{}
        \begin{minipage}[c]{0.035\linewidth}
            \includegraphics[width=\linewidth]{images/ipc_heatmaps/#2-legend.png}
        \end{minipage}
    }
\newenvironment{figleftlabel}[1]{
        \begin{minipage}[c]{0.03\linewidth}
            \figrowlegend{#1}
        \end{minipage}\figspaceleft{}
        \begin{minipage}[c]{0.92\linewidth}}
    {\end{minipage}}
\newcommand{\figspaceleft}{\hspace{1em}}
\newcommand{\figspaceright}{\hspace{1em}}
\newcommand{\figspacemid}{\hfill}
\newcommand{\figspacezenmid}{\hfill}
\newcommand{\figcollegend}[1]{\begin{center}\textbf{\large #1}\end{center}}
\newcommand{\figrowlegend}[1]{\rotatebox{90}{\large\textbf{#1}}}
\newcommand{\figspacerow}
\newcommand{\tabcoverage}{\centerheader{Cov.}}
\newcommand{\taberror}{\centerheader{Err.}}
\newcommand{\tabkendall}{\centerheader{$\tau_K$}}
\newcommand{\tabcoverageunit}{\centerheader{(\%)}}
\newcommand{\taberrorunit}{\centerheader{(\%)}}
\newcommand{\tabkendallunit}{\centerheader{$(1)$}}
\definecolor{tabcovbg}{RGB}{235,255,255}
\definecolor{taberrbg}{RGB}{255,240,255}
\definecolor{tabktaubg}{RGB}{239,250,235}
\newcolumntype{C}{>{\columncolor{tabcovbg}}r}
\newcolumntype{E}{>{\columncolor{taberrbg}}r}
\newcolumntype{K}{>{\columncolor{tabktaubg}}r}
\def\bsr{\texttt{BSR}\xspace}
\def\divps{\texttt{DIVPS}\xspace}
\def\jmp{\texttt{JMP}\xspace}
\def\jnle{\texttt{JNLE}\xspace}
\def\addss{\texttt{ADDSS}\xspace}
\def\vcv{\texttt{VCVTT}\xspace}
\newcommand{\tool}{Palmed\xspace}
\newif\iffulldoc
\def\ndisjoint{\supset\hspace{-0.5em}\subset}
\def\ndisjoint{)\hspace{-0.44em}(}
\def\ndisjoint{>\hspace{-0.6em}<}
\begin{document}

\title{PALMED: Throughput Characterization for Superscalar Architectures -  Extended Version}

\maketitle

\begin{abstract}
  In a super-scalar architecture, the scheduler dynamically assigns micro-operations (\muops) to execution ports.
  The port mapping of an architecture describes how an instruction decomposes into \muops and lists for each \muop the set of ports it \emph{can} be mapped to.
  It is used by compilers and performance debugging tools to characterize the performance throughput of a sequence of instructions repeatedly executed as the core component of a loop.

  This paper introduces a dual \emph{equivalent} representation:
  The resource mapping of an architecture is an abstract model where, to be executed, an instruction \emph{must} use a set of abstract resources, themselves representing combinations of execution ports.
  For a given architecture, finding a port mapping is an important but difficult problem.
  Building a resource mapping is a more tractable problem and provides a simpler and equivalent  model.
  This paper describes \tool, a tool that automatically builds a resource mapping for pipelined, super-scalar, out-of-order CPU architectures.
  \tool does not require hardware performance counters, and relies solely on runtime measurements.

  We evaluate the pertinence of our dual representation for throughput modeling by extracting a representative set of basic-blocks from the compiled binaries of the SPEC CPU 2017 benchmarks. We compared the throughput predicted by existing machine models to that produced by \tool, and found comparable accuracy to state-of-the art tools, achieving sub-10 \% mean square error rate on this workload on Intel's Skylake microarchitecture.

\end{abstract}

\begin{IEEEkeywords}
performance model, port mapping, throughput, superscalar architecture, compiler, performance debugging, code selection
\end{IEEEkeywords}

\section{Introduction}
Performance modeling is a critical component for program optimizations, assisting compilers as well as developers in predicting the performance of code variations ahead of time.
Performance models can be obtained through different approaches that span from precise and complex simulation of a hardware description~\cite{Zesto,GEM5,PTLSim} to application level analytical formulations~\cite{Roofline,DBLP:journals/corr/HammerEHW17}.
A widely used approach for modeling the CPU of modern pipelined, super-scalar, out-of-order processors consists in decoupling the different sources of bottlenecks, such as the latency related ones (critical path, dependencies), the memory-related ones (cache behavior, bandwidth, prefetch, etc.), or the port throughput related ones (instruction execution units).
Decoupled models allow to pinpoint the source of a performance bottleneck, which is critical both for compiler optimization~\cite{GCC,LLVM}, kernel hand-tuning~\cite{BLIS,SPIRAL}, and performance debugging~\cite{CQA,IACA,OSACA,MIAMI,LLVM:MCA,uiCA}. In particular, the code selection step is based on ad-hoc instruction cost models, that \tool aims at automatically generating for new architectures.
Cycle-approximate simulators such as ZSim~\cite{ZSim} or MCsimA+~\cite{MCSim} can also take advantage of such an instruction characterization.
\emph{This paper focuses on modeling the port throughput, that is, estimating the performance of a dependency-free loop where all memory accesses are L1-hits.}

Such modeling is usually based on the so-called port mapping of a CPU, that is the list of execution ports each instruction can be mapped to.
This motivated several projects to extract information from available documentation~\cite{EXEgesis,OSACA}.
But the documentation on commercial CPUs, when available, is often vague or outright lacking information.
Intel's processor manual~\cite{Intel-manual}, for example, does not describe all the instructions implemented by Intel cores, and for those covered, it does not even provide the decomposition of individual instructions into micro operations (\muops), nor the execution ports that these \muops can use.

Another line of work that allows for a more exhaustive and precise instruction characterization is based on micro-benchmarks, such as those developed to characterize the memory hierarchy~\cite{MemH-jack}. While characterizing the latency of instructions is quite easy~\cite{AgnerFog,instlatx64.atw.hu,Granlund}, throughput is more challenging.
Indeed, on super-scalar processors, the throughput of a combination of instructions cannot be simply derived from the throughput of the individual instructions.
This is because instructions compete for CPU resources, such as functional units, or execution ports, which can prevent them from executing in parallel.
It is thus necessary to not only characterize the throughput of each individual instruction, but also to come up with a description of the available resources and the way they are shared.

In this work, we present a fully automated, architecture-agnostic approach, fully implemented in \tool, to automatically build a mapping between instructions and execution ports. It automatically builds a static performance model of the throughput of sets of instructions to be executed on a particular processor. While prior techniques targeting this problem have been presented, e.g. \cite{PMEvo,uops.info}, we make several key contributions in \tool to improve automation, coverage, scalability, accuracy and practicality:
\begin{itemize}[noitemsep,topsep=0em,leftmargin=10pt]
\item We introduce a dual equivalent representation of the port mapping problem, into a conjunctive abstract resource mapping problem, facilitating the creation of specific micro-benchmarks to saturate resources.
\item We present several new algorithms: to automatically generate versatile sets of saturating micro-benchmarks, for any instruction and resource; to build efficient Linear Programming optimization problems exploiting these micro-benchmark measurements; and to compute a complete resource mapping for \emph{all} benchmarkable instructions.
\item We present a complete, automated implementation in \tool, which we evaluate against numerous other approaches including IACA \cite{IACA}, LLVM-mca \cite{LLVM:MCA}, PMEvo \cite{PMEvo} and UOPS.info \cite{uops.info}.
\end{itemize}

This paper has the following structure.
Related work is first discussed in Sec.~\ref{sec:related}.
Sec.~\ref{sec:background} discusses the state-of-practice and presents our novel approach to automatically generate a valid port mapping.
Sec.~\ref{sec:contrib} presents formal definitions and the equivalence between our model and the three-level mapping currently in use.
Sec.~\ref{sec:algos} presents our architecture-agnostic approach to deduce the abstract mapping without the use of any performance counters besides elapsed CPU cycles.
Sec~\ref{sec:eval} extensively evaluates our approach against related work on two off-the-shelf CPU before concluding.

\section{Related Work}
\label{sec:related}
Intel has developed a static analyzer named IACA~\cite{IACA} which uses its internal mapping based on proprietary information.
However, the project is closed-source and has been deprecated since April 2019. Even though some latencies are given directly in the documentation~\cite{Intel-manual}, they are known to contain errors and approximations, in addition to being incomplete.

First attempts on x86 to measure the latency and throughput were led by Agner Fog~\cite{AgnerFog} and Granlund~\cite{Granlund} using hand-written microbenchmarks.
Fog also uses hardware performance counters and hand-crafted benchmarks to reverse-engineers port mappings for Intel, AMD and VIA CPUs.
Fog's mappings are considered by the community to be quite accurate.
For example, the machine model of the x86 back-end of the LLVM compiler framework~\cite{LLVM} is partially based on them~\cite{Agner:UsedByIntel}.
However, Fog and Granlund's approach is tedious and error-prone, since
modern CPU instruction sets have thousands of different intricate
instructions.
Abel and Reineke~\cite{uops.info, DBLP:journals/corr/abs-1911-03282} have tackled this problem by combining an automatic microbenchmark generator with an algorithm for port-mapping construction.
Their technique relies on hardware counters for the number of {\muop}s executed on each execution port, only available on recent Intel CPUs.
They recently started providing data on the newest generations of AMD CPUs, but
by lack of necessary hardware counters, only latency and throughput are
published.

OSACA~\cite{OSACA} is an open source alternative to IACA offering a similar static throughput and latency estimator.
It relies on automated benchmarks manually linked with publicly available documentation to infer the port mapping and the latencies of the instructions. The tool Kerncraft~\cite{DBLP:journals/corr/HammerEHW17} focuses on hot loop bodies from HPC applications while also modeling caches;
its mapping comes from automated benchmarks generated through Likwid~\cite{DBLP:conf/icppw/TreibigHW10} and hardware counters measurements.
CQA~\cite{DBLP:conf/hipc/RubialONJL14}, a static loop analyzer integrated into the MAQAO framework~\cite{DBLP:conf/hipc/DjoudiNJ08}, takes a similar path while also supporting OpenMP routines.
It combines dependency analysis, microbenchmarks, and a port mapping and previous manual results to offer various types of optimization advice to the user, such as vectorisation, or how to avoid port saturation.
Both Kerncraft and CQA use a hard-coded port mapping based on Fog's work and official Intel and AMD documentation.

Besides the classic port mappings, machine learning based approaches have also been used, \textit{eg.} in Ithemal~\cite{Ithemal}, to approximate the throughput of basic blocks with good accuracy.
However, the resulting model is completely opaque and cannot be analyzed or used for any other purpose than the prediction of basic block throughputs. For instance, Ithemal does not report on the influence of each instruction, which is critical for manual assembly optimization.

PMEvo~\cite{PMEvo} is a tool that, like \tool{}, automatically generates a set of benchmarks that it uses to build a port mapping.
It produces a disjunctive tripartite model with instructions, \muops{}, and ports, which is the key different with \tool{}.
It does not require hardware performance counter, and only relies on runtime measurements of its benchmarks.
The set of benchmarks used is determined semi-randomly using a genetic algorithm.
The benchmarks themselves are simpler than those used by \tool{} and contain at most two different types of instructions.
The main difference between PMEvo and \tool{} is that PMEvo uses internally a disjunctive bipartite resource model, instead of the conjunctive model used by \tool{}.
These models, while able to accurately predict the execution of pipelined instructions bottlenecked only on the execution ports, cannot represent other bottlenecks like the reorder buffer, or the non-pipelined instructions like division.
More importantly, PMEvo's approach is less scalable, as handling more instructions may quickly lead to an overwhelming number of microbenchmarks, while our approach is focused to generate specifically microbenchmarks that saturate resources. \tool{} can complete the full mapping, benchmarking included, in a few hours. Another key to this scalability is our incremental approach to handle complex instructions using a linear programming formulation to compute automatically, and optimally, the mapping.

\section{Motivation and Overview}
\label{sec:background}

\subsection{Background}

In this work, we consider a CPU as a processing device mainly described by the so-called ``port model''.
Here, instructions are first fetched from memory, then decomposed into one or more \emph{micro-operations}, also called \emph{\muops}. The CPU schedules these \muops on a free compatible execution port, before the final \emph{retirement} stage.
Even though some instructions such as \texttt{add \%rax, \%rax} translate into only a single \muop, the x86 instruction set also contains more complex instructions that translate into multiple \muops.
For example, the \texttt{wbinvd} (\textit{Write Back and Invalidate Cache}) instruction produces as many \muops as needed to flush every line of the cache, leading to thousands of \muops~\cite{uops.info}.

\emph{Execution ports} are controllers routing \muops to \emph{execution units} with one or more different functional capabilities: for example, on the Intel Skylake architecture, only the port~4 may store data;
and the store address must have previously been computed by an \emph{Address Generation Unit}, available on ports~2, 3 and~7.

The \emph{latency} of an instruction is the number of clock cycles elapsed between two dependent computations.
The latency of an instruction $I$ can be experimentally measured by creating a micro-benchmark that executes a long chain of instances of $I$, each depending on the previous one.

The \emph{throughput} of an instruction is the maximum number of instances of that instruction that can be executed in parallel in one cycle. On every recent x86 architecture, all units but the divider are fully pipelined, meaning that they can reach a maximum throughput of one \muop per cycle~--~even if their latency it greater than one cycle.
For an instruction $I$, the throughput of $I$ can be experimentally measured by creating a micro-benchmark that executes many non-dependent instances of $I$:
The combined throughput of a multiset\footnote{A multiset is a set that can contain multiple instances of an element. As with normal sets, the order of elements is not relevant} of instructions can be defined similarly.
For example, the throughput of $\{\addss^2,\bsr\}$, i.e. two instances of $\addss$ and one instance of $\bsr$, is equal to the number of instructions executed per cycle (IPC) by the micro-benchmark:

\begingroup
    \fontsize{7.8pt}{10pt}\selectfont
\begin{verbatim}
repeat:
  ADDSS %xmm1 %xmm1; ADDSS %xmm2 %xmm2; BSR %rax %rax;
  ADDSS %xmm3 %xmm3; ADDSS %xmm4 %xmm4; BSR %rbx %rbx;
  ADDSS %xmm5 %xmm5; ADDSS %xmm6 %xmm6; BSR %rcx %rcx;
  ...
\end{verbatim}
\endgroup

A \emph{resource-mapping} describes the resources used by each instruction in a way that can be used to derive the throughput for any multiset of instructions, without having to execute the corresponding micro-benchmark. Such information is crucial for manual assembly optimization to pinpoint the precise cause of slowdowns in highly optimized codes, and measure the relative usage of the peak performance of the machine.

In this work, we target the automatic construction of a resource mapping for a given CPU on which we can accurately measure elapsed cycles for a code fragment. Note that \tool only uses benchmarks that have no dependencies, that is, where all instructions can execute in parallel.
Consequently the order of instructions in the benchmark does not matter\footnote{We assume, like all related work we are aware of, that the CPU scheduler is able to optimally schedule these simple kernels.}.

\def\mappingsheight{5em}
\begin{figure*}[h!tb]
	\begin{subfigure}[b]{0.35\textwidth}\centering
		\includegraphics[height=\mappingsheight]{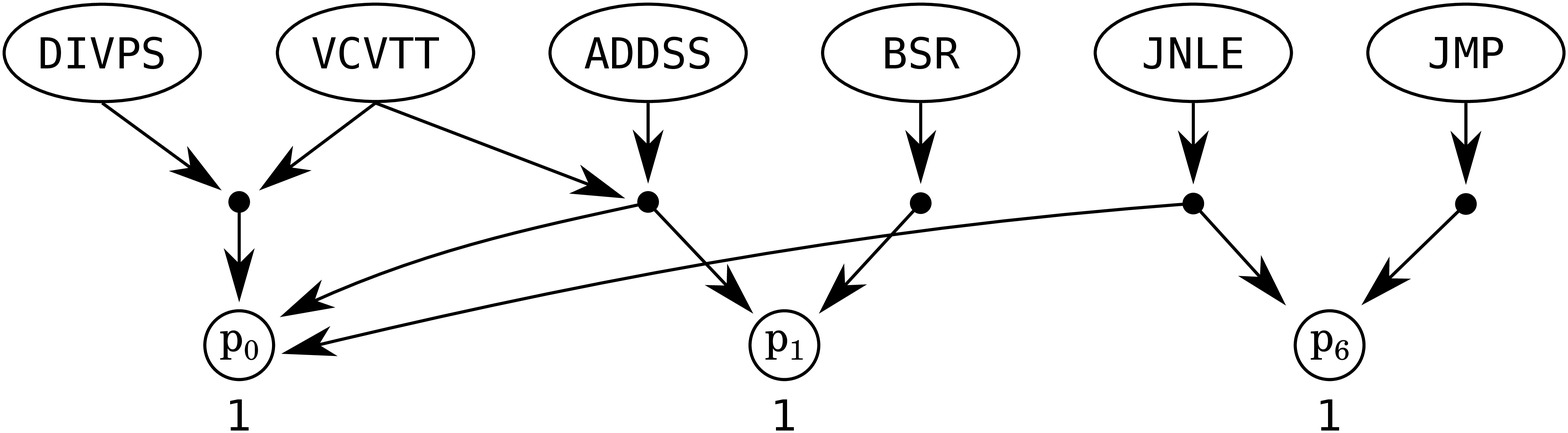}
		\caption{\label{fig:port-mapping}Port mapping (disjunctive form) and maximum throughput of each port.}
	\end{subfigure}
	\hfill
	\begin{subfigure}[b]{0.35\textwidth}\centering
		\includegraphics[height=\mappingsheight]{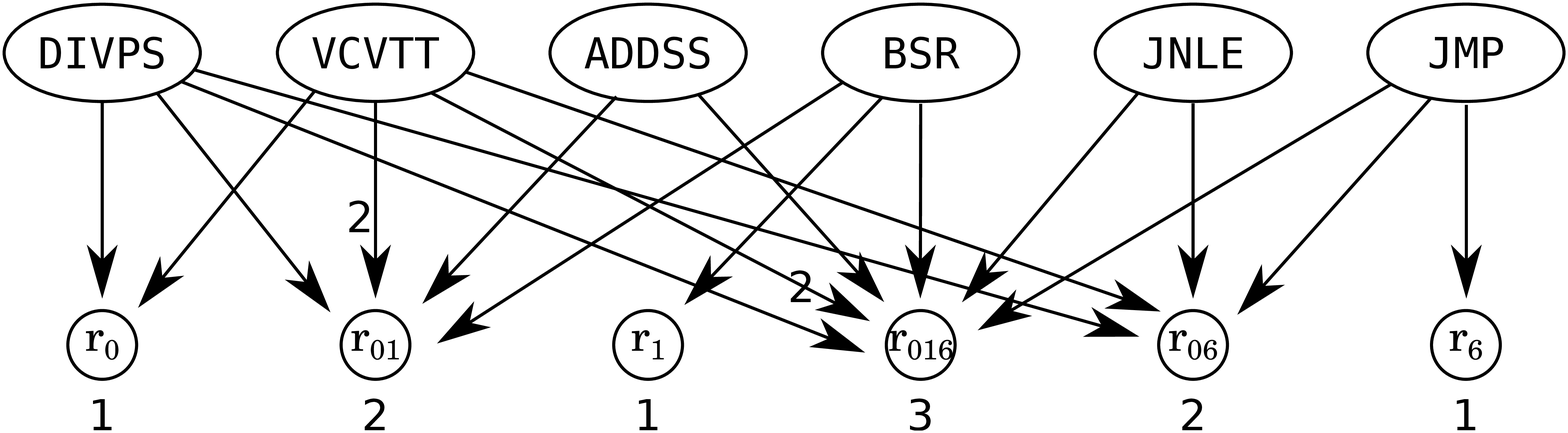}
		\caption{\label{fig:resource-mapping}Abstract resource mapping (conjunctive form) and maximum throughput of each resource.}
	\end{subfigure}\hfill
	\begin{subfigure}[b]{0.23\textwidth}\centering
		\includegraphics[height=\mappingsheight]{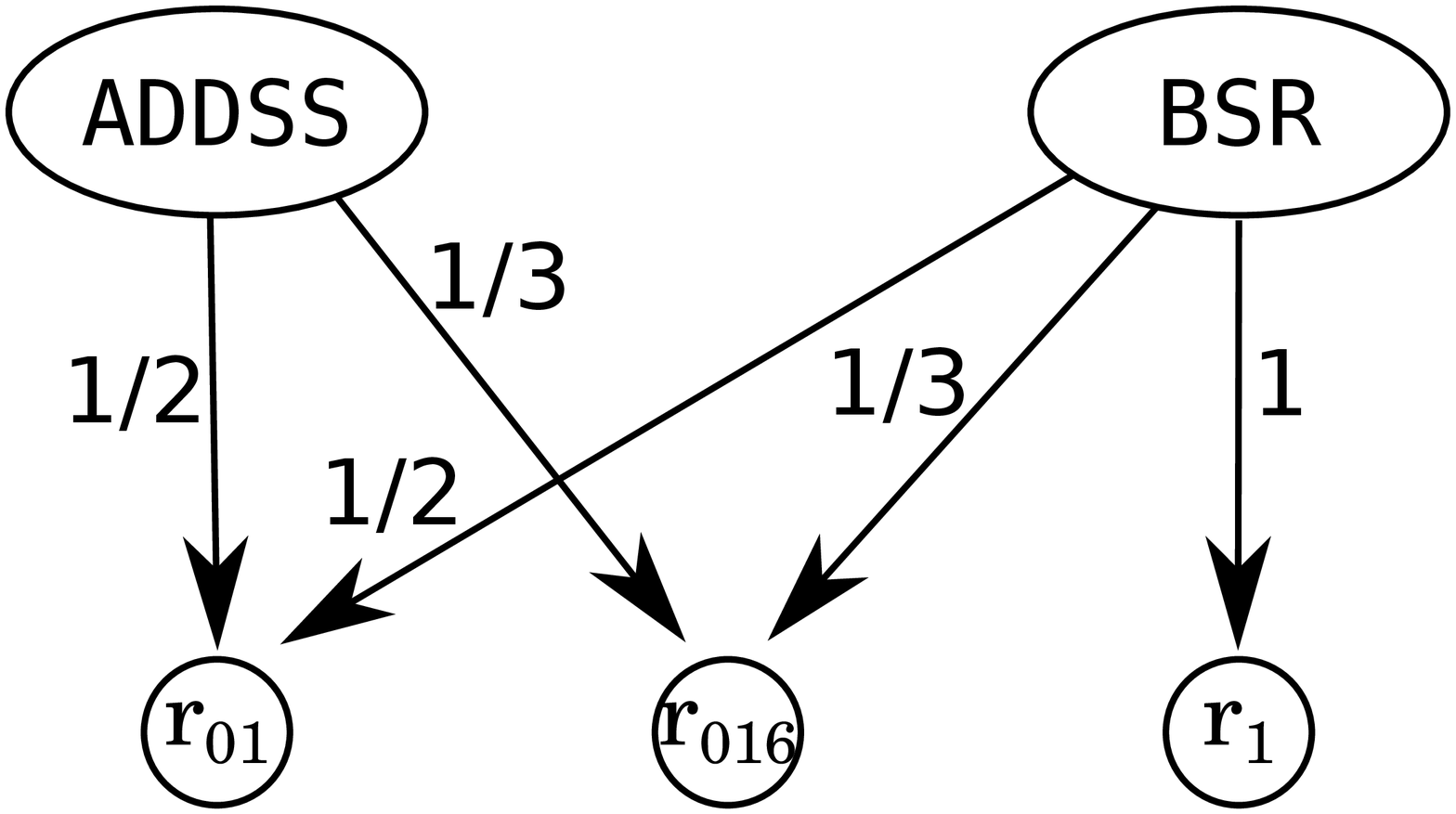}
		\caption{\label{fig:resource-mapping-norm}Normalized conjunctive form for \addss and \bsr.}
	\end{subfigure}
	\caption{\label{fig:mapping}Mappings computed for a few SKL-SP instructions.\vspace{-.25cm}}
\end{figure*}

\subsection{Constructing a Resource Mapping}

To characterize the throughput of each individual instruction, a description of the available resources and the way they are shared is needed.
The most natural way to express this sharing is through a port mapping, a tripartite graph that describes how instructions decompose to \muops and assigns \muops to execution ports (see Fig.~\ref{fig:port-mapping}).
The goal of existing work has been to reverse engineer such a port mapping for different CPU architectures.

The first level of this mapping, from instructions to \muops, is conjunctive, i.e., a given instruction decomposes into one or more of each of the \muops it maps to.
The second level of this mapping, on the other hand, is disjunctive, \textit{i.e.} a \muop can choose to execute on any one of the ports it maps to.
Even with hardware counters that provide the number of \muops executed per cycle and the usage of each individual port, creating such a mapping is quite challenging and requires a lot of manual effort with ad hoc solutions to handle all the cases specific to each architecture~\cite{CQA,AgnerFog,Granlund,uops.info}.

Such approaches, while powerful and allowing a semi-automatic characterization of basic-block throughput, suffer from several limitations.
First, they assume that the architecture provides the required hardware counters.
Second, they only allow modeling the throughput bottlenecks associated with port usage, and neglect other resources, such as the front-end or reorder buffer.
Thus, it provides a performance model of an ideal architecture that does not necessarily fully match reality.

To overcome these limitations, we restrict ourselves to only using cycle measurements when building our performance model.
Not relying on specialized hardware performance counters may complicate the initial model construction, but in exchange our approach is able to model resources not covered by hardware counters with relative ease.
This also paves the way to significantly ease the development of modeling techniques for new CPU architectures.
One of the main challenges is to generate a set of micro-benchmarks that allows the detection of all the possible resource sharing.
Unfortunately, to be exhaustive, and in the absence of structural properties, this set is combinatorial: all possible mixes of instructions need to be evaluated.
A simple way to reduce the set of  micro-benchmarks required is to reduce the set of modeled instructions to those that are emitted by compilers~\cite{Ithemal,PMEvo}.
Another natural strategy followed by Ithemal~\cite{Ithemal} is to build micro-benchmarks from the ``most executed'' basic-blocks of some representative benchmarks.
A third strategy, used by PMEvo~\cite{PMEvo}, is to restrict micro-benchmarks to contain repetitions of only two different instructions.

Our solution is constructive and follows several successive steps that allow building a non-combinatorial number of micro-benchmarks that stresses the usage of each individual resource, thus characterizing the resource usage of \emph{all} instructions.

The second main challenge addressed by PMEvo is to build an interpretable model, that is, a resource-mapping that can be used by a compiler or a performance debugging tool, instead of a black-box only able to predict the throughput of a microkernel.
One issue with the standard port-mapping approach, as used in~\cite{uops.info,OSACA,LLVM:MCA}, is that computing the throughput of a set of instructions requires the resolution of a flow problem;
that is, given a set of micro-benchmarks, finding a mapping of \muop{}s to ports that best expresses the corresponding observed performances requires solving a multi-resolution linear optimization problem.
This linear problem also does not scale to larger sets of benchmarks, even when restricting the micro-benchmarks to only contain up to two different instructions.
PMEvo addressed this issue by using a evolutionary algorithm that approximates the result.

\begin{table}[h!tb]
\centering
\caption{\sc\label{tab:relatedworksummary}Summary of key features of \tool{} vs. related work}
{\small
\begin{tabular}{l c c c c}
\toprule
& no HW & no manual & \multirow{2}{*}{interpretable}  & \multirow{2}{*}{general}\\
& counters & expertise & & \\
\midrule
llvm-mca~\cite{LLVM:MCA} & \no & \no & \yes & \yes\\
Ithemal~\cite{Ithemal} & \yes & \no & \no & \no\\
IACA~\cite{IACA} & \na & \no & \yes & \yes\\
uop.info~\cite{uops.info} & \no & \no & \yes & \yes\\
PMEvo~\cite{PMEvo} & \yes & \yes & \yes & \no\\
{\bf \tool} & \yes & \yes & \yes & \yes\\
\bottomrule
\end{tabular}
}
\end{table}

\subsection{Resource Mapping: Dual Representation}
\label{subsec:dual}

Our approach is based on a crucial observation: a dual representation exists for which computing the throughput is not a linear problem, but a simple formula instead.
While it takes several hours to solve the original disjunctive-port-mapping formulation, only a few minutes suffice for the corresponding conjunctive-resource-mapping formulation.


For the sake of illustration only (\tool finds in practice a mapping for all supported instructions), we consider the Skylake instructions restricted to those that only use ports~0, 1, or 6 (denoted as $p_0$, $p_1$, and $p_6$).
Fig.~\ref{fig:port-mapping} shows the port mapping for six such instructions.
In this example: the \muop of \bsr has a single port $p_1$ on which it can be issued;
as for instruction \addss, its \muop can be issued on either $p_0$ or $p_1$.
Hence, \bsr has a throughput of one, that is, only one instruction can be issued per cycle,
whereas \addss has a throughput of two: two different instances of \addss may be executed in parallel by $p_0$ and $p_1$.
The throughput of the multiset $K=\{\addss^2,\bsr\}$, more compactly denoted by $\addss^2\bsr$, is therefore determined by the combined throughput of resources $p_0$ and $p_1$.
Indeed, in a steady state mode, the execution can saturate both resources by repeating the pattern represented in Fig~\ref{fig:sched1}.
In this case, there clearly does not exist any better scheduling, and the corresponding execution time for $K$ is 3 cycles for every 6 instructions, that is, an Instruction Per Cycle (IPC) of 2.
Now, if we consider the set $\addss\;\bsr^2$, its throughput is limited by $p_1$.
Indeed, the optimal schedule in that case would repeat the pattern represented in Fig~\ref{fig:sched2}, which requires 2 cycles for 3 instructions, that is, an IPC of 1.5.
More generally, the maximum throughput of a multiset on a tripartite port-mapping can be solved by expressing the minimal scheduling problem as a flow problem.

\begin{figure}[h!tb]
\begin{subfigure}[b]{0.45\linewidth}
\centering
\footnotesize
\begin{tabular}{c|c}
\toprule
$p_0$ & $p_1$\\
\midrule
\addss & \bsr\\
\addss &  \bsr\\
\addss & \addss\\
\bottomrule
\end{tabular}
\caption{\label{fig:sched1}$\addss^2\;\bsr$}
\end{subfigure}
\begin{subfigure}[b]{0.45\linewidth}
\centering
\footnotesize
\begin{tabular}{c|c}
\toprule
$p_0$ & $p_1$\\
\midrule
\addss & \bsr\\
$\emptyset$ &  \bsr\\
\bottomrule
\end{tabular}
\caption{\label{fig:sched2}$\addss\;\bsr^2$}
\end{subfigure}

\caption{\label{fig:scheds_exemple}Disjunctive port assignment examples}
\vspace{-.2cm}
\end{figure}

The \emph{dual representation}, advocated in this paper, corresponds to a \emph{conjunctive} bipartite resource mapping as illustrated in Fig.~\ref{fig:resource-mapping}.
In this mapping, an instruction such as \addss{} which uses one out of two possible ports $p_0$ and $p_1$ will only use the abstract resource $r_{01}$ representing the combined load on both ports, and will use neither $r_0$ nor $r_1$.
In this model, the maximum throughput of $r_{01}$ is the sum of the throughput of $p_0$ and $p_1$, that is, 2 uses per cycle.
Instructions that may only be computed on $p_0$ will then use $r_0$ \emph{and} $r_{01}$, along with all other resources combining the use of $p_0$ with other ports such as $r_{06}$ and $r_{016}$.
Followingly, the average execution time of a microkernel is computed as the maximum load over all abstract resources, that is, their number of uses divided by their throughput (see Sec.~\ref{sec:contrib}).
One can prove (see~\cite{PMD-full}) the strict equivalence between the two representations \emph{without the need for any combinatorial explosion in the number of combined resources}. Because of this property, the trade-off offered by the conjunctive formulation (more resources for a simpler throughput computation) offers better overhaul solving complexity that former disjunctive-based approaches for real processors, hence the better scalability of \tool.
Indeed, in practice, some combined resources are not needed (e.g. $r_{16}$ in our example) as their usage is already perfectly described by the usage of individual resources (here, $r_1$ and $r_6$).



A key contribution of this paper is to provide a less intricate two-level view, that can be constructed quicker than previous works.
Instead of representing the execution flow as the traditional three-level ``instructions decomposed as micro-operations (micro-ops) executed by ports'' model, we opt for a direct ``instructions use abstract resources'' model.
Whereas an instruction is transformed into several micro-ops which in turn \emph{may} be executed by different compute units;
our bipartite model \emph{strictly uses} every resource mapped to the instructions.
In other words, the \emph{or} in the mapping graph are replaced with \emph{and}, which greatly simplifies throughput estimation.
This representation may also represent other bottlenecks such as the instruction decoder or the reorder buffer as other abstract resources.
Note that this corresponds to the user view, where the micro-ops and their execution paths are kept hidden inside the processor.
An important contribution of this paper is to provide a constructive algorithm that provides a non-combinatorial set of representative micro-benchmarks that can be used to characterize all instructions of the architecture.


\subsection{\tool{}: Flow of Work}
Fig.~\ref{fig:big_picture} overviews the major steps of \tool{}, which are extensively described in Sec.~\ref{sec:algos}.
Our algorithm follows an approach similar to the one developed by uops.info:
its principle is to first find a set of \emph{basic instructions} producing only one \muop and bound to one port.

This first step can be done on Intel CPUs by measuring the \muop per cycle on each port for each instruction through performance counters.

\begin{figure*}[h!tb]\
    \center
    \includegraphics[height=10em]{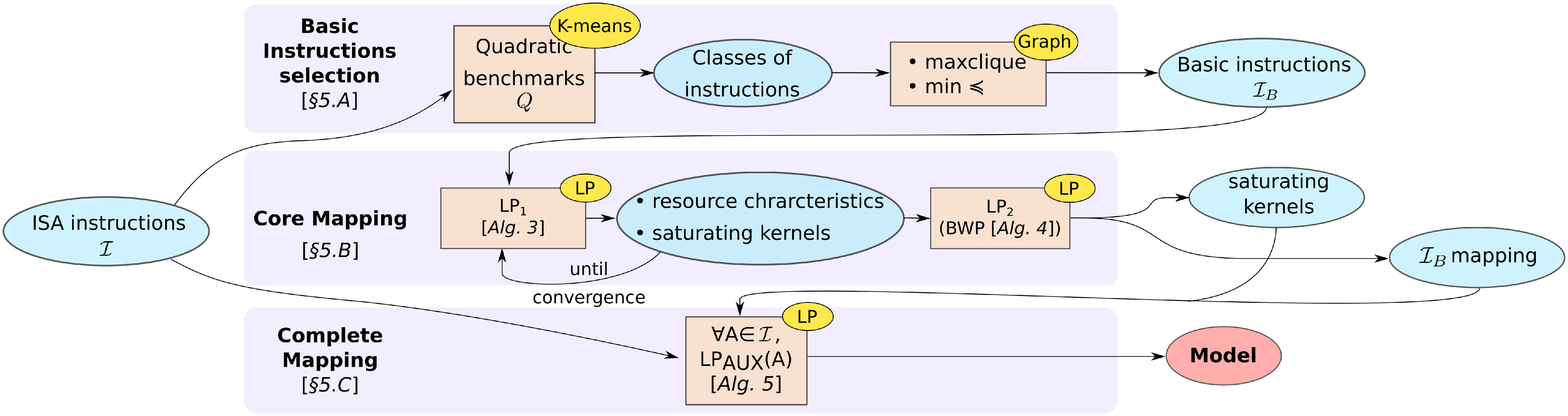}
    \caption{\label{fig:big_picture}High-level view of the algorithms of \tool{}}
\end{figure*}

Those basic instructions are then used to characterize the port mapping of any general instruction by artificially saturating one-by-one each individual port and measuring the effect on the usage of the other ports.
The challenge addressed by \tool is to find a mapping, even for architectures that do not have such hardware counters.

This translates in two major hardships:
firstly, in our case, there is no predefined resources; secondly, there even is no simple technique to find the number of \muops an instruction decomposes into.
As illustrated by Fig.~\ref{fig:big_picture} the algorithm of \tool is composed of three steps:
1.~Find basic instructions;
2.~Characterize a set of abstract resources (expressed as a \emph{core mapping}) and an associated set of saturating microkernels (a single instruction might not be enough to saturate a resource);
3.~Compute the resource usage of each other instruction with respect to the core mapping.

As an example, let us go back to our example: instructions using only $p_0$, $p_1$, or $p_6$. On Intel's Skylake microarchitecture, there exists 754 benchmarkable instructions using only these 3 ports.
Quadratic benchmarking --~that is, measuring the execution time of one benchmark per pair of instruction, leading to a quadratic number of measures (567762) --~ allows us to regroup those who have the same behavior together, leading to only 9 classes of instructions.
For each class, a single instruction is used as a representative.
Among those instructions, two heuristics (described in sec~\ref{sec:basic}) select the set of basic instructions, outputting \divps, \bsr, \jmp, \jnle, and \addss.

Fig.~\ref{fig:resource-mapping} shows the output of the \emph{Core Mapping} stage in Fig.~\ref{fig:big_picture}, in bold.
In practice, abstract resources are internally named $R_0$, \dots, $R_5$.
For convenience we renamed them to the hardware execution ports they correspond to: for example, the abstract resource $r_{01}$ corresponds to the combined use of port $p_0$ and $p_1$ for an optimal schedule.

The Core mapping also computes a set of saturating micro-benchmarks that individually saturate each of the individual abstract resource.
Here, each basic instruction will constitute by itself a saturating micro-benchmark: \divps will saturate $r_0$, \bsr will saturate $r_1$, \jmp will saturate $r_6$, \addss will saturate $r_{01}$, and \jnle will saturate $r_{06}$. Note that this is not the case in general:
we possibly need to combine several basic instructions together to saturate a resource. Here, the saturating micro-benchmark for resource $r_{016}$ is composed of two basic instructions: \addss and \jnle.
The last phase of our algorithm will, for each of the 742 remaining instructions, build a set of micro-benchmarks that combine the saturating kernels with the instruction, and compute its mapping.

\section{The Bipartite Resource Mapping}
\label{sec:contrib}

\iffulldoc
This section presents the theoretical foundation of \tool, that is, the equivalence between the disjunctive \muop{} mapping graph formerly used and a dual conjunctive formulation.
\else
This section provides a formal presentation of the dual conjunctive formulation used by \tool.
\fi

\begin{restatable}[Microkernel]{definition}{sec3_defmicrokernel}
\label{def:sec3_microkernel}
A microkernel $K$ is an \emph{infinite loop} made up of a finite multiset of instructions, $K=I_1^{\sigma_{K,1}}I_2^{\sigma_{K,2}}\cdots I_m^{\sigma_{K,m}}$ without dependencies between instructions, $\sigma_K$ representing the number of repetition of the instruction $K$ in the microbenchmark. The number of instructions executed during one loop iteration is $|K|=\sum_{i}\sigma_{K,i}$.
\end{restatable}
\vspace{-.2cm}

In a classical \emph{disjunctive} port mapping formalism, an instruction $i$ from a microkernel $K$ is assigned to a port (\emph{resource} $r$) that is compatible.
The execution time of $K$ is determined by the resource which is used the most by its instructions in a given such assignment, and depends on the assignment picked, as presented in Sec~\ref{sec:background}.
Instead, we consider a \emph{conjunctive} port mapping:
\vspace{-.2cm}
\begin{restatable}[Conjunctive port mapping]{definition}{sec3_defconjunctive}
\label{def:sec3_conjunctive}
A conjunctive port mapping is a bipartite \emph{weighted} graph $(I,\Resources,E,$ $\rho_{I,\Resources})$ where:
$I$ represents the set of instructions;
$\Resources$ represents the set of \emph{abstract resources}, that has a (normalized) throughput of 1;
$E \subset I\times \Resources$ expresses the required use of abstract resources for each instruction.
An instruction $i$ that uses a resource $r$ ($(i,r)\in E$) always uses the same proportion
(number of cycles) 
$\rho_{i,r}\in \Q^+$.
If $i$ does not use $r$, then $\rho_{i,r}=0$.

Let $K=I_1^{\sigma_{K,I_1}}I_2^{\sigma_{K,I_2}}\cdots I_m^{\sigma_{K,I_m}}$ be a microkernel.
In a steady state execution of $K$, for each loop iteration, instruction $i$ must use a resource $r$ for $(\sigma_{K,i}\rho_{i,r})$ cycles.
The number of cycles required to execute one loop iteration is:
$$t(K)=\max_{r\in \Resources} \left(\sum_{i\in K} \sigma_{K,i}\cdot\rho_{i,r}\right)$$
\end{restatable}

One should observe that Def.~\ref{def:sec3_conjunctive} defines formally a \emph{normalized} version where throughputs of abstract resources are set to 1.
For the sake of clarity, the example in Sec.~\ref{sec:background} was considering non-normalized throughputs, that is, different than 1.
Going from non-normalized (as in Fig.~\ref{fig:resource-mapping}) to normalized form (as in Fig.~\ref{fig:resource-mapping-norm}) simply relies in dividing the incoming edges of a resource by the resource's throughput before setting its throughput to 1.
For example, in the non-normalized form \vcv{} uses 2 times $r_{01}$, which has a throughput of 2, leading to a normalized $\rho_{\vcv,r_{01}}$ of $1$.
Similarly, $\rho_{\addss,r_{016}}=1/3$.

\begin{restatable}[Throughput]{definition}{sec3_defthroughput}
\label{def:sec3_throughput}
The throughput $\ipc{K}$ of a microkernel $K$ is its instruction per cycle rate (IPC), defined as:
$$
\ipc{K} = \frac{|K|}{t(K)} = \frac{\sum_{i\in K} \sigma_{K,i}}{\max_{r\in \Resources} \sum_{i\in K} \sigma_{K,i}\cdot\rho_{i,r}}
$$
\end{restatable}

\paragraph*{\bf Example} If $K = \addss^2\;\bsr$, as in Fig~\ref{fig:sched1},
{\small
\begin{align*}
    t(K) &= \max\,_{r\in\{r_1, r_{01}, r_{016}\}}\left(2\times\rho_{\addss,r} + \rho_{\bsr,r}\right) \\
    &= \max\left(
    \resourcelabel{r_1}~2\times 0+1,
    \resourcelabel{r_{01}}~2\times\frac{1}{2}+\frac{1}{2},
    \resourcelabel{r_{016}}~2\times\frac{1}{3}+\frac{1}{3}
    \right)\\
    &= 1.5 \\
    \ipc{K} &= (2+1) / 1.5 = 2
\end{align*}}
On $K' = \addss\;\bsr^2$, as in Fig~\ref{fig:sched2}, the same computation gives $t(K') = 2$, the bottleneck being $r_1$; hence, $\ipc{K'} = 3/2$.

The mathematical definitions, the method to build a conjunctive port
mapping from a disjunctive one, and the abstract resource and the equivalence proof can be found
\iffulldoc
in Appendix~\ref{app:dual_repr_equiv}.
\else
in~\cite{PMD-full}.
\fi

\section{Computing Resource Mapping}
\label{sec:algos}
As depicted in Fig.~\ref{fig:big_picture}, our approach can be decomposed into three different steps.
Sec.~\ref{sec:basic} describes the selection of \emph{basic instructions}, a subset of instructions that map to as few resources as possible.
Sec.~\ref{sec:core_mapping} describes the computation of the \emph{core mapping} for these basic instructions.
The core mapping stays fixed for the rest of the algorithm.
Along with the core mapping, we also select a saturating micro-benchmark for each resource, called the \emph{saturating kernel}.
The saturating kernel is made up of basic instructions that do not place a heavy load on other resources.
Sec.~\ref{sec:complete_mapping} describes how \tool uses the saturating kernels and the core mapping to deduce, one by one, the resources used by the remaining instructions of the targeted architecture.

\subsection{Basic Instructions Selection}
\label{sec:basic}
The first step of our algorithm trims the instruction set to extract a minimal set of instructions for which the mapping will be computed.
As this (core) mapping will be reused later, we need enough instructions to detect all resources, but the more instructions we have, the longer the resolution of the linear problem this mapping will take.
We thus first apply two simple filters that reduce the number of basic instructions, as depicted in the first half of Algo.~\ref{alg:basic_inst}.

\paragraph*{\bf Low-IPC}
  If $\ipc{a}<1$ (measured with a microbenchmark repeating only $a$), then $a$ is not consider as a candidate for basic instructions. Assuming every physical resource to have a throughput of 1, such instructions use one resource more than once. However, these low-IPC instructions are still mapped at the very last step of \tool (see Sec.~\ref{sec:complete_mapping}).

  Then, we compute, for every remaining pair of instruction $(a,b)$, the throughput of the microkernel $a^{\ipc{a}}b^{\ipc{b}}$. This set of benchmarks is called \emph{quadratic benchmarks} (see Fig.~\ref{fig:big_picture}) as their number is quadratic with respect to the number of instructions. These quadratic benchmarks are later reused in each of the following heuristics.

\paragraph*{\bf Equivalent classes} If $\forall p,\ \ipc{a^{\ipc{a}}p^{\ipc{p}}} = \ipc{b^{\ipc{b}}p^{\ipc{p}}}$ then keep only $a$ or $b$.
  The second filter removes duplicates, that is, if two instructions behave similarly with regard to the evaluation used for our basic instruction selection, then one of them can be ignored.
  Obviously, on a real machine, despite all the crucial efforts to remove execution hazards, measured IPC never perfectly match and the correct criteria for selecting a representative instruction for duplicates should approximate the equality test $\forall p,\ \ipc{a^{\ipc{a}}p^{\ipc{p}}} \approx \ipc{b^{\ipc{b}}p^{\ipc{p}}}$.
  The construction of equivalence classes and associated representative instruction in this context uses hierarchical clustering~\cite{hclust}.

\begin{algorithm}[ht!]
\small

  \Fn{Select\_basic\_insts($\Instructions, n$)}{
    $\Instructions_F := \Instructions$\;
    \tcp{Remove low-IPC; compute eq. classes}
    \ForEach{$a\in \Instructions_F$}{
      \lIf{$\ipc{a} \leq 1 - \epsilon$}{
        $\Instructions_F:=\Instructions_F-\left\{a\right\}$
      }
      \If{$\exists b\in \Instructions_F,\ \forall p\in\Instructions,\ \ipc{a^{\ipc{a}}p^{\ipc{p}}} = \ipc{b^{\ipc{b}}p^{\ipc{p}}}$}{
        $\Instructions_F:=\Instructions_F-\left\{a\right\}$
      }
    }
    \tcp{Select very basic instructions}
    \ForEach{$a\in \Instructions_F$}{
      $\textit{Dj}[a] := \left\{b \in \Instructions_F, a^{\ipc{a}}b^{\ipc{b}} = \ipc{a} + \ipc{b}\right\}$
    }
    \textbf{let}
    $a <_\textit{VB} b \Leftrightarrow $ \\ ~\hfill$\left(\left|\textit{Dj}[a]\right|>\left|\textit{Dj}[b]\right|\right) \vee \left(\left|\textit{Dj}[a]\right|=\left|\textit{Dj}[b]\right| \wedge \ipc{a}>\ipc{b}\right)$\;
    $\Instructions_\textit{VB}:=\emptyset$\;
    \For{$a \in \Instructions_F$ in $<_\textit{VB}$ order}{
      \lIf{$\Instructions_\textit{VB}\subset \textit{Dj}[a]$}{
        $\Instructions_\textit{VB} := \Instructions_\textit{VB} \cup \{ a \}$
      }
      \lIf{$\left|\Instructions_{VB}\right|=n$}{\textbf{return} $\Instructions_B:=\Instructions_{VB}$}
    }
    \tcp{Select most greedier instructions}
    $\Instructions_\textit{MF}:=\emptyset$\;
    \For{$a \in \Instructions_F$ in $\preccurlyeq_\textit{greedier}$ order}{
      $\Instructions_\textit{MF}:=\Instructions_\textit{MF} \cup \{ a \}$\;
      \lIf{$\left|\Instructions_{VB}\cup\Instructions_\textit{MF}\right|=n$}{\textbf{return} $\Instructions_B:=\Instructions_\textit{VB}\cup\Instructions_\textit{MF}$}
    }
    \Return{$\Instructions_B:=\Instructions_\textit{VB}\cup\Instructions_\textit{MF}$}\;
  }
\caption{\label{alg:basic_inst}Set of basic instructions $\Instructions_B$}
\end{algorithm}

\begin{algorithm}[ht!]
\vspace{2mm}
\small
    \Fn{Core\_mapping($\Instructions_B$)} {
        \tcp{Characterize resources}
        $\Kernels := \bigcup_{(a,b)\in \Instructions_B^2,\ a\ne b}\left\{a,\ a^{\ipc{a}}b^{\ipc{b}},\ a^Mb\right\}$\;
         \DoUntil{$\Kernels_\textit{new}=\emptyset$} {
          $\mathcal{G} := \text{Shape\_Mapping}(\Kernels,\Instructions_{VB}, \Instructions_{MF})$) \tcp*{$LP_1$}
          $\Kernels_\textit{new}:=\bigcup_{r\in\Resources} \left\{\Pi_{i\in\Instructions_B,\ \rho_{i,b}\ge \epsilon} i^{\ipc{i}}\right\}-\Kernels$\;
          $\Kernels:=\Kernels\cup\Kernels_\textit{new}$\;
        }
        $\mathcal{G} := \text{Mapping}(\Kernels, \mathcal{G}$) \tcp*{$LP_2$}
        \tcp{Find saturating kernels}
        \ForEach{$r\in \mathcal{R}$}{
          $\textit{sat}[r] := K \in \Kernels \text{ s.t. } \rho_{K,r}=1 \text{ minimizing } \textit{cons}(K)$\;
          \For{$i \in \Instructions_B$ s.t. $i \notin \textit{sat}[r]$} {
            $\Kernels := \Kernels \cup \{K_\textit{sat}(i,r)\}$\;
          }
        }
        \Return $\Kernels, \textit{sat}, \mathcal{G}$\;
    }
\caption{\label{alg:core-mapping}Core mapping and saturating kernels}
\end{algorithm}

\begin{algorithm}[ht!]
\small
  \Fn{$Shape\_mapping(\Kernels,\Instructions_{VB}, \Instructions_{MF})$} {
    	\Solve{}{
    	$\forall (i,r) \in \Instructions \times \Resources,\; \rho_{i,r} \in \{0,1\}$\;
    	$\forall i\in \Instructions_{VB}, \min_{r\in \Resources} 1-\rho_{i,r} + \sum_{j \in \Instructions_{VB}\setminus\{i\}} \rho_{j,r} = 0$\;
    	$\forall i \in \Instructions_{MF}, \max_{r\in \Resources} \rho_{i,r} + \sum_{j\ndisjoint i} \rho_{j, r} = 1+|\{j \ndisjoint i\} |$\;
    	\ForEach{$k \in \Kernels$ s.t. $\{ i^\alpha\in k \text{ s.t. } \text{cycles}(i^\alpha) = \text{cycles}(k)\} = \emptyset$ }{
    		$\max_{r\in\Resources} \sum_{i\in k} \rho_{i,r} \geq |\{i \in k\}|$\;
    	}
    	
    	\ForEach{$k \in \Kernels$ s.t. $\{ i^\alpha\in k \text{ s.t. } \text{cycles}(i^\alpha) = \text{cycles}(k)\} \neq \emptyset$ }{
    		$\forall i^\alpha\in k \text{ s.t. } \text{cycles}(i^\alpha) = \text{cycles}(k)$\\\hfill
    		$\min_{r\in \Resources} 1-\rho_{i, r} + \sum_{j\in k, j\neq i}\rho_{j, r} = 0$\;
    	}
    	Minimize $\sum_{i\in \Instructions_B} \max_{r\in \Resources} \rho_{i,r}$\;
    }
    \Return $(\Instructions, \Resources, \{\rho_{i,r}\})$\;
  }
\caption{\label{alg:kernel_lp1}$\texttt{LP}_1$: Shape of core mapping}
\end{algorithm}

\begin{algorithm}[ht!]
\small
  \Fn{Mapping($\Kernels, \mathcal{G}$)}{
  \Solve{Bipartite Weight Problem}{
  $\Instructions := $ instructions($\Kernels$)\;
  $(\rho_{i,r})_{\Instructions, \Resources} := $ edges($\mathcal{G}$)\;
    $\forall (i,r)\in\Instructions\times \Resources,\; 0\le \rho_{i,r} \in [0,1]$\;
    $\forall (K,r)\in \Kernels\times \Resources,$\\
        ~\hfill $\rho_{K,r}=\left(\sum_{i\in\Instructions} \sigma_{K,i}\rho_{i,r}\right)\times \ipc{K}/\left(\sum_{i\in\Instructions}\sigma_{K,i}\right)$\;
    $\forall (K,r)\in \Kernels\times \Resources,\; \rho_{K,r}\le 1$\;
    $\forall K\in \Kernels,\;S_K = \max_{r\in\Resources} \rho_{K,r}$\;
    Minimize $\sum_{K \in \Kernels} (1-S_K)$\;
}
   \Return $(\Instructions, \Resources, \{\rho_{i,r}\})$\;
  }
\caption{\label{alg:kernel_lp2}$\texttt{LP}_2$: Bipartite Weight Problem (BWP), used in $\texttt{LP}_2$ and $\texttt{LP}_\texttt{AUX}$}
\end{algorithm}

\begin{algorithm}[ht!]
\small
    $\Instructions_B:=\textrm{select\_basic\_insts}(\Instructions,n)$\;
    $\Kernels, \textit{sat}, \mathcal{G}:= \textrm{Core\_mapping}(\Instructions_B)$\;
    \ForEach{$\textit{inst} \in \Instructions$}{
      $\Kernels:=\bigcup_{r\in\Resources} K_\textit{sat}(\textit{inst},r)$\;
      $\Instructions := \Instructions_B \cup \{\textit{inst}\}$\;
        \Solve{Find a solution to the following problem}{
          $\forall r\in\Resources,\; 0\le \rho_{\textit{inst},r}$\;
          $\forall (K,r)\in \Kernels\times \Resources,\; \rho_{k,r}=\left(\sum_{i\in\Instructions} \sigma_{K,i}\rho_{i,r}\right)\times \ipc{k}/\left(\sum_{i\in\Instructions}\sigma_{K,i}\right)$\;
          $\forall (K,r)\in \Kernels\times \Resources,\; \rho_{K,r}\le 1$\;
          $\forall K\in \Kernels,\;S_K = \max_{r\in\Resources} \rho_{K,r}$\;
    	      Minimize $\sum_{K \in \Kernels} (1-S_K)$\;
        }
    }
\caption{\label{alg:final_mapping}$\texttt{LP}_\texttt{AUX}$: Complete resource mapping}
\end{algorithm}

Once low IPC instruction duplicates have been removed, the selection relies on two criteria (cf Algo.~\ref{alg:basic_inst}):

\begin{itemize}
\item \textbf{Very basic instructions:}
Instructions $a$ and $b$ are considered \emph{disjoint} if $\ipc{a^{\ipc{a}}b^{\ipc{b}}} = \ipc{a} + \ipc{b}$.
  The set of very basic instructions is defined as a maximal clique of disjoint instructions.
This captures instructions that maps to a single resource.
Indeed, two instructions that do not share any resource
will have their IPC additive, thus belonging to the maximum clique of our graph.

\item \textbf{Most greedier instructions:}
  Instruction $a$ is considered more greedier than $b$ ($a \preccurlyeq_{greedier} b$) if $\forall p, \ipc{a^{\ipc{a}}p^{\ipc{p}}} \geq \ipc{b^{\ipc{b}}p^{\ipc{p}}}$.
  This relation defines a pre-order, and we select the $n$ most greedier instructions.
\end{itemize}

\subsection{Core Mapping}
\label{sec:core_mapping}

The core mapping phase as described in Alg.~\ref{alg:core-mapping} is decomposed in two steps.
The objective of the first step is to build the shape of the resource mapping from the basic instructions, containing all visible resources and possible edges.
In the second step, \tool computes the values of the edges and outputs a \emph{saturating benchmark} for every detected resource. Similarly to Abel and Reinecke's work~\cite{uops.info}, these benchmarks are reused as an indicator of the usage of their resource in the complete mapping phase (Sec.~\ref{sec:complete_mapping}).

Both of these steps use linear programming to build step-by-step a mapping that reflects accurately the measured IPC of a set of microkernels $\Kernels$, and are detailed in the following two paragraphs.

\paragraph*{\bf Characterize resources ($\texttt{LP}_1$)}
The goal of the first step is to find the shape of the resource mapping, that is, the number of resources needed and the possible edges from core instructions to resources. For this, \tool{} solves the following Integer Linear Programming (ILP) problem, formalized in Alg.~\ref{alg:kernel_lp1}, repeated until no new benchmark is added:
\begin{itemize}[leftmargin=*, label={}, itemsep=0.05cm]
\item \textbf{Objective function:} Minimize the number of resources.
\item \textbf{Constraints:}
From the following seed of microkernels:
\begin{enumerate}[noitemsep, topsep=0cm]
\item $a \in \Instructions$ alone;
\item $a^{\ipc{a}}b^{\ipc{b}}$, as this benchmark has and IPC of $\ipc{a} + \ipc{b}$ if $a$ and $b$ are independents or if their common resources are not dominantly used.
\item $a^{M} b$ (with $M = 4$ in practice --~see~\cite{PMD-full} for detailed justification) to avoid the convergence of the solver to a simpler solution with fewer resources.
\end{enumerate}

We derive the following constraints (in the order of Alg.~\ref{alg:kernel_lp1}):
\begin{itemize}[noitemsep, topsep=0cm]
\item Each very basic instruction as defined in Sec.~\ref{sec:basic} is linked to at least one resource unused by other very basic instructions (line 4).
\item For each greedier instruction $i$ as defined in Sec.~\ref{sec:basic}, there exists at least one resource common to $i$ and to all other instructions $a$ for which $\ipc{i^{\ipc{i}}a^{\ipc{a}}} \neq \ipc{i} + \ipc{a}$ (line 5). This relation corresponds to the negation of the \emph{disjoint} relation defined in Sec.~\ref{sec:basic}, that we note $\ndisjoint$.
\item For all other microkernels: 1) every instruction identified as saturating (that is, instructions for which the execution time of the microkernel is equal to its execution time) maps to at least a resource unused by other instructions of the microkernel (line 7); 2) if no saturating instruction is found, then there exists a resource shared by every instruction of the benchmark (line 10).
\end{itemize}
\end{itemize}

The enrichment is done as follows:
for each resource found, we add a benchmark composed of every instruction using it with a multiplicity of their IPC, splitting it in case of undesired merges.
Once convergence has been reached, we expect most of existing resources and edges to be discovered: \tool{} passes to the ($\texttt{LP}_2$) step to compute the value of the edges.

\paragraph*{\bf Bipartite Weight Problem (BWP) and Core Mapping ($\texttt{LP}_2$)}
The BWP is formalized in Alg.~\ref{alg:kernel_lp2}, and aims at finding the correct values of the edges found during the $\texttt{LP}_1$.
Using the notations from Def.~\ref{def:sec3_conjunctive}: $\rho_{i,r}\in \Q^+$ expresses the proportional usage of the resource $r$ by instruction $i$, and $\ipc{K}$ the average number of instructions executed each cycle when $K$ is executed by the CPU.
The proportion of a resource $r$ that is used is thus  $\rho_{K,r}=\overline{K}\cdot \left(\sum_{i\in\Instructions} \sigma_{K,i}\rho_{i,r}\right)/\left(\sum_{i\in\Instructions}\sigma_{K,i}\right)$, bounded by its throughput ($\rho_{K,r}\le \rho_r=1$).
One of the resources must be the limiting factor, that is, $\exists r,\ \rho_{K,r}=1$. However, we authorise sub-saturation of the resources, acknowledging our model does not predict accurately every microkernel, and we note $S_K = \max_r \rho_{K,r} \leq 1$. We also restrict the possible edges to the ones output by the $\texttt{LP}_1$.
These constraints form our linear problem minimizing the sum of predictions error, that is $\sum_{k\in \Kernels} (1 - S_k)$.

Once the mapping has been computed, for every resource $r$, a saturating kernel $\textit{sat}[r]$ is chosen among all saturating microbenchmarks of the $\texttt{LP}_2$ ($K$ s.t. $\rho_{K,r}=1$, at least one necessarily exists by construction) as the one that has minimum consumption:
$$\textit{cons}(K)=\sum_{i\in \Instructions,\ r\in \Resources} \rho_{i,r}$$

\subsection{Complete Mapping ($\texttt{LP}_\texttt{AUX}$)}
\label{sec:complete_mapping}
In the last step, corresponding to Algo.~\ref{alg:final_mapping}, an optimization problem is solved for each remaining instruction.
The formulation of the new optimization problem is very similar to the BWP, except that the resources and the edges of the core mapping computed previously are frozen.
The presence or absence of an edge from the to-be-mapped instruction $i$ to a resource $r$ is constrained by using $K_\textit{sat}(i,r) = i^{\ipc{i}} \textit{sat}[r]^{L*\ipc{\textit{sat}[r]}}$ in the set of microbenchmarks, with $L=4$ in practise. The idea is to force the saturation of $r$ by charging it with $\textit{sat}[r]$, hence expressing the usage of $r$ by $i$.

\section{Evaluation}
\label{sec:eval}
Our evaluation section compares throughput accuracy on assembly microkernels extracted from two benchmarks suites: the SPECrate version of SPECint2017~\cite{SPECCPU2017} and Polybench~\cite{polybench}.

We compare \tool{} against the native execution, along with the predictions of four existing tools: IACA~\cite{IACA}, PMEvo~\cite{PMEvo}, llvm-mca~\cite{LLVM:MCA} and the port mapping deduced from uops.info's work~\cite{uops.info}.

Our evaluation is performed on two architectures: the SKL-SP is an Intel Xeon Silver 4114 CPU at 2.20 GHz, using Debian, Linux kernel 4.19 and PAPI 6.0.0.1 to collect the execution time in cycles for each microbenchmarks, restraining to non-AVX-512 instructions. The ZEN is an AMD EPYC 7401P CPU at 2 GHz with a similar software setup.
For each of these two architectures, the number of generated microbenchmarks, resources found and mapped instructions are gathered in Table~\ref{fig:test_environment}.

\subsection{Calibration of the Model}
\label{sec:model-calibration}
The port mapping is computed using the algorithm presented in Section~\ref{sec:algos} using a list of x86 instructions extracted from Intel's XED~\cite{Intel-XED}. We discard instructions which cannot be instrumented in practice, such as instruction modifying the control flow (as our microbenchmark generator cannot handle non-trivial control flow in the instrumented instructions), privileged instructions, along with instructions whose IPC is lower than 0.05, as they do not present any interest for performance prediction of throughput-limited microkernels. While benchmarking memory instructions, we ensure that every access hits the L1 cache to avoid cache-related bottlenecks, which are out of \tool's scope. Due to the complexity of the x86 instruction set, we separate the SSE and AVX instructions from the ``base ISA'': we apply separately the heuristics of Sec.~\ref{alg:basic_inst} before gathering all selected instructions in a single combined \emph{basic instructions}' set as described in Fig.~\ref{fig:big_picture}.

We also forbid benchmarks combining different extensions (e.g. SSE+AVX). Indeed, combinations of several vector extensions of different width are known to cause extra latency, that is, a sort of dependency from one instruction to the other (two consecutive SSE instructions would not be penalized, whereas one SSE and one AVX will). This violates our assumption that the relative order of instruction does not matter, and in practice we observed a significant degradation of the mapping without this mitigation.

Because of variations in the real-world measurements, we constrain the error rate to $0.05$ for the micro-benchmark coefficient, meaning that the number of repetitions of an instruction inside its microkernel differs by at most 5\% from what the algorithm requires.
For example, a benchmark $a^{\ipc{a}}b^{\ipc{b}}$ with $\ipc{a}~=~0.06$ and $\ipc{b} = 1$ will be rounded to $a^1b^{20}$. Note that in the BWP defined in Algorithm~\ref{alg:kernel_lp2}, we use the rounded coefficients and not the ideal ones.
The IPC is also rounded accordingly. Note that our microbenchmark generator is pre-constrained with these limitations; therefore we did not evaluate \tool with another measurement back-end~-- although we expect similar results as we ensured to have reproducible execution times.

\begin{figure*}
    \begin{subfigure}[c]{\linewidth}
        \centering
        \begin{figleftlabel}{}
            \begin{minipage}[c]{0.03\linewidth}~\end{minipage}\figspaceleft{}
            \begin{minipage}[c]{0.13\linewidth}
                \figcollegend{\tool{}}
            \end{minipage}\figspacemid{}
            \begin{minipage}[c]{0.13\linewidth}
                \figcollegend{uops.info}
            \end{minipage}\figspacemid{}
            \begin{minipage}[c]{0.13\linewidth}
                \figcollegend{PMEvo}
            \end{minipage}\figspacemid{}
            \begin{minipage}[c]{0.13\linewidth}
                \figcollegend{IACA}
            \end{minipage}\figspacemid{}
            \begin{minipage}[c]{0.13\linewidth}
                \figcollegend{llvm-mca}
            \end{minipage}\figspaceright{}
            \begin{minipage}[c]{0.035\linewidth}~\end{minipage}
        \end{figleftlabel}

        \begin{figleftlabel}{SKL-SP}
            \figfiverow{SPEC 2017}{spec-W-skx}

            \figspacerow

            \figfiverow{Polybench}{polybench-W-skx}
        \end{figleftlabel}

        \figspacerow{}

        \begin{figleftlabel}{ZEN1}
            \figthreerow{SPEC 2017}{spec-W-zen}

            \figspacerow

            \figthreerowlegend{Polybench}{polybench-W-zen}
        \end{figleftlabel}

        \caption{IPC prediction profile heatmaps~--~predictions closer to the
            red line are more accurate. Predicted IPC ratio (Y) against native
            IPC (X)}
        \label{fig:realcpu_heatmap}
    \end{subfigure}

    \vspace{1em}

    \begin{subtable}[c]{\linewidth}
    {
        \centering
        \caption{Translation block coverage (Cov.), root-mean-square error on
        IPC predictions (Err.) and Kendall's tau correlation coefficient
    ($\tau_K$) compared to native execution}
		\footnotesize

\label{tab:realcpu_results}\begin{tabular}{c c | C E K | C E K | C E K | C E K | C E K}
    \toprule
    & & \multicolumn{3}{c}{PMD} & \multicolumn{3}{c}{uops.info} & \multicolumn{3}{c}{PMEvo} & \multicolumn{3}{c}{IACA} & \multicolumn{3}{c}{llvm-mca}\\
    & & \tabcoverage & \taberror & \tabkendall & \tabcoverage & \taberror & \tabkendall & \tabcoverage & \taberror & \tabkendall & \tabcoverage & \taberror & \tabkendall & \tabcoverage & \taberror & \tabkendall\\
    & Unit & \tabcoverageunit & \taberrorunit & \tabkendallunit & \tabcoverageunit & \taberrorunit & \tabkendallunit & \tabcoverageunit & \taberrorunit & \tabkendallunit & \tabcoverageunit & \taberrorunit & \tabkendallunit & \tabcoverageunit & \taberrorunit & \tabkendallunit\\
    \midrule
    \multirow{2}{*}{\textbf{SKL-SP}} & \textbf{SPEC2017} & \na & 7.8 & 0.90 & 99.9 & 40.3 & 0.71 & 71.3 & 28.1 & 0.47 & 100.0 & 8.7 & 0.80 & 96.8 & 20.1 & 0.73\\
     & \textbf{Polybench} & \na & 24.4 & 0.78 & 100.0 & 68.1 & 0.29 & 66.8 & 46.7 & 0.14 & 100.0 & 15.1 & 0.67 & 99.5 & 15.3 & 0.65\\
    \multirow{2}{*}{\textbf{ZEN1}} & \textbf{SPEC2017} & \na & 29.9 & 0.68 & \na & \na & \na & 71.3 & 36.5 & 0.43 & \na & \na & \na & 96.8 & 33.4 & 0.75\\
     & \textbf{Polybench} & \na & 32.6 & 0.46 & \na & \na & \na & 66.8 & 38.5 & 0.11 & \na & \na & \na & 99.5 & 28.6 & 0.40\\
    \bottomrule
\end{tabular}

        }
    \end{subtable}

    \caption{Accuracy of IPC predictions compared to native execution of \tool versus uops.info, PMEvo, IACA and llvm-mca on SPEC CPU2017 and PolyBench/C 4.2}
        \label{fig:realcpu_results}

\end{figure*}

\subsection{Throughput Estimations}

To evaluate \tool, the same microkernel is run:
(1)~natively on each CPU, with the IPC measured with \texttt{CPU\_CLK\_UNHALTED};
(2)~using our mapping with abstract resources corresponding to the actual machine, as described in Section~\ref{sec:model-calibration};
(3)~using Abel's work (uops.info)~\cite{uops.info}, by running a conjunctive mapping with exact compatibility and approximating the execution time by the port with the highest usage;
(4)~using PMEvo~\cite{PMEvo}, ignoring any instruction not supported by its provided mapping;
(5)~using IACA~\cite{IACA}, by inserting assembly markers around the kernel and running the tool;
(6)~using llvm-mca~\cite{LLVM:MCA}, by inserting markers in the assembly code generated by our back-end and running the tool with this assembly.


Unlike PMEvo and llvm-mca, UOPS and IACA do not support the ZEN1 architecture; hence the absence of data.

The microkernels are extracted from two well-known benchmark suites: SPECInt2017~\cite{SPECCPU2017} and Polybench~\cite{polybench}. For Polybench, we used QEMU~\cite{qemu} to gather the translation blocks executed at runtime along with their number of executions.
For SPEC, we used static binary analysis tools to extract the basic blocks along with performance counters statistics in order to recover the performance-critical section of the code, as the cost of running an emulator was too high to reproduce Polybench's setup. Overall these two benchmark suites generate thousands of basic blocks, and for each we use the various methods above to display the predicted performance of a microkernel made of the same instruction mix that is occurring in that basic block. This evaluation approach allows to generate a high variety of realistic instruction mixes (e.g., combining SIMD and address calculations for numerical kernels like in Polybench).

Fig.~\ref{fig:realcpu_results} synthesizes our results in two pieces. First, Fig.~\ref{fig:realcpu_heatmap} displays the results as a heatmap for each basic block, comparing the predicted throughput with the measured one. A dark area at coordinate $(x, y)$ means that the selected tool has a prediction accuracy of $y$ for a significant number of microkernels with a real IPC of $x$.

Then, Table~\ref{tab:realcpu_results} synthesizes, for each tool, its error rate, aggregated over all the basic blocks of the test suite using a \textit{Root-Mean-Square} method:
\[ \text{Err}_\text{RMS, tool} = \sqrt{\sum_{i}
    \frac{\text{weight}_i}{\sum_j \text{weight}_j} \left(
    \frac{\text{IPC}_{i,\text{tool}} - \text{IPC}_{i,\text{native}}}{\text{IPC}_{i,\text{native}}}
    \right)^2
    }
\]

We also provide Kendall's $\tau$ coefficient~\cite{kendall1938tau}, a measure of the rank correlation of a predictor~-- that is, for each pair of basic blocks, whether a predictor predicted correctly which block had the higher IPC\@. The coefficient varies between $-1$ (full anti-correlation) and $1$ (full correlation).

The same table also provides a \textit{coverage} metric, \emph{with respect to \tool}. This metric characterizes the proportion of basic blocks supported by \tool{} that the tool was able to process. Note that the ability to process a basic block varies from tool to tool: some work in degraded mode when meeting instructions they cannot handle, some will crash on the basic block. For PMEvo, we ignored any instruction not supported by their mapping --~degrading the quality of the result; hence, a plain error is a basic block in which \emph{no single instruction was supported}. Although it would be fairer to other tools to measure absolute coverage --~that is, the proportion of basic blocks supported by the tool, regardless of what \tool supports~--, technical limitations prevented us from doing so: running the various tools requires our back-end to generate assembly code, which can only be done for the instructions it supports.

\begin{table}[h!tb]
{\small
\begin{center}
\caption{\sc\label{fig:test_environment}Main features of the obtained mappings}
\begin{tabular}{c|c|c}
\toprule
Machine & SKL-SP & ZEN1 \\
\midrule
\multirow{2}{*}{Processor} & 2x Intel Xeon & AMD EPYC \\
 & Silver 4114 & 7401P \\
Cores & 20 & 24 \\
\midrule
Benchmarking time & 8h & 6h \\
LP solving time & 2h & 2h \\
Overall time & 10h & 8h \\
\midrule
Gen. microbenchmarks & $\sim$ 1,000,000 & $\sim$ 1,000,000 \\
Resources found & 17 & 17 \\
uops' inst. supported & 3313 & 1104 \\
Instructions mapped & 2586 & 2596 \\
\bottomrule
\end{tabular}
\end{center}
}
\vspace{-.25cm}
\end{table}

We compare the number of instructions supported by \tool{} with the ones supported by uops.info as a baseline. As uops supports only partially AMD's architecture (providing only throughput and latencies, but no usable port mapping), less than half the instructions supported by our tool are present for this target.
Contrarily, on SKL-SP, uops supports the AVX-512 extension, therefore leading to a more complete set of supported instructions.
PMEvo's mapping behaves poorly in terms of coverage (see Fig.~\ref{tab:realcpu_results}), failing to support all instructions in more than 25 \% of the basic blocks on any benchmark and processor tested. This behavior is due to our different compilation options, as PMEvo's supported instructions are directly collected from their SPEC2017 binaries. As a consequence, both MSE and Kendall's tau values are lower than other tools as those unsupported instructions are treated as if they took no resource at all on our IPC estimates.

Moreover, \tool requires 2h of solving time (see Tbl.~\ref{fig:test_environment}) to map about 2500 instructions. This is between one half (SKL-SP) and one eighth (Zen) of PMEvo's solving time~\cite{uops.info}, demonstrating the scalability of \tool with respect to the number of instructions.

In Fig.~\ref{fig:realcpu_results}, we observe that \tool{} performs significantly better than uops.info and PMEvo on both platforms. On Skylake, it outperforms all other tested tools in terms of Kendall's tau, and compares well with IACA and LLVM-MCA, archiving sub-10\,\% mean square error rate on SPEC2017. However, those two last tools use manual expertise and are tailored for a platform, whereas our tool is fully automated and generic.

On Zen1, \tool{} is comparable to LLVM-MCA, but shows a greater error rate than on Intel. This is due to the internal organization of the Zen microarchitecture, which uses a separated pipeline for integer/control flow and floating point/vector operations.
As \tool{} tries to minimize the number of resources, this separation is not properly detected, leading to IPC predictions lower than the actual value as seen on the heatmaps on Fig.~\ref{fig:realcpu_heatmap}.

More generally, IACA, uops.info and LLVM-MCA tend to over-estimate the IPC, which is due to their port-based approach: bottlenecks coming from neither ports nor front-end limitations are not taken into account, leading to higher IPC estimations for microkernels where other resources are bottlenecking. Contrarily, benchmarking-based approaches (\tool and PMEvo) present both under and over approximations as they are based on real-life execution, where all bottlenecks are present. Note that \tool, IACA, LLVM-MCA (Zen1 only) and PMEvo (Zen only) also express the front-end bottleneck: the limit on the maximal number of instructions being decoding in one cycle (no over-approximation of microkernels with high IPC), that is, a maximal IPC of 4 on SKL-SP and 5 for Zen1. Therefore, we expect \tool (and PMEvo) to have maximal error rate on benchmarks with few instructions, case in which some undetected / wrongly detected common resource will have higher importance, whereas LLVM-MCA, uops.info and IACA will tend to be more fragile on long microkernels with possible non-port related resources~-- especially memory ones.


\section{Conclusion}

We presented \tool{} which automatically builds a resource mapping for CPU instructions.
This allows to model not only execution port usage, but also other limiting resources, such as the front-end or the reorder buffer.
We presented an end-to-end approach to enable the mapping of thousands of instructions in a few hours, including microbenchmarking time.
Our key contributions include the mathematically rigorous formulation of the port mapping problem as solving iteratively linear programs, enabling an incremental and scalable approach to handling thousands of instructions.
We provided a method to automatically generate microbenchmarks saturating specific resources, alleviating the need for statistical sampling.
We demonstrated on one Intel and one AMD high-performance CPUs that \tool{} generates automatically practical port mappings that compare favorably with state-of-the-art systems like IACA and uops.info that either use performance counters or manual expertise.

\section*{Acknowledgments}
The works have been funded by ECSEL-JU under the program ECSEL-Innovation Actions-2018 (ECSEL-IA) for research project CPS4EU (ID-826276) in the area Cyber-Physical Systems. This work was also supported in part by the U.S. National Science Foundation award CCF-1750399.


\bibliographystyle{IEEEtranS}
\bibliography{main}

\iffulldoc
  \newpage
  ~
  \newpage
  \appendix
\section{Appendix - Proof of the Equivalence of the Dual Eepresentation}
\label{app:dual_repr_equiv}

This section contains the definition of a dual conjunctive bipartite port mapping, to a disjunctive port mapping.
It also contains the full proof of equivalence between these two representations.

\subsection{Primary Definitions}
\label{sec:defs}

\begin{restatable}[Microkernel]{definition}{defmicrokernel}
\label{def:microkernel}
A microkernel $K$ is an \emph{infinite loop} made up of a finite multiset of instructions, $K=I_1^{\sigma_{K,1}}I_2^{\sigma_{K,2}}\cdots I_m^{\sigma_{K,m}}$ without dependencies between instructions. The number of instructions executed during one loop iteration is $|K|=\sum_{i}\sigma_{K,i}$.
\end{restatable}
\vspace{-.2cm}

\begin{restatable}[Disjunctive port mapping]{definition}{defdisjunctive}
\label{def:disjunctive}
A disjunctive port mapping is a bipartite graph $(V,\Resources,E)$ where:
$V$ represents the set of \muops;
$\Resources$ represents the set of resources (corresponding to execution ports in a real-world CPU);
$E \subset V\times \Resources$ expresses the possible mappings from \muops to ports.
In this original form each port $r\in\Resources$ has a throughput $\rho(r)$ of 1.

Let $K=I_1^{\sigma_{K,1}}I_2^{\sigma_{K,2}}\cdots I_m^{\sigma_{K,m}}$ be a microkernel where each instruction is composed of a single \muop $v_i$.

A valid assignment represents the choice of which resources to associate with a given instance of an instruction. However, this choice might change between iterations. Thus, we represent the valid assignment as a mapping $p : \Instructions \times \Resources \mapsto [0;1]$ where $p_{i,r}$ corresponds to the frequency a given resource is chosen.
We also define $R_i(p) = \{r,\ p_{i,r} \neq 0\}$. This assignment is valid if:
$$\forall I_i\in K, \forall r \in R_i(p),\  \left(v_i, r\right)\in E $$
$$\forall I_i\in K,\ \sum_{r \in R_i(p)} p_{i,r} = 1$$

The \emph{execution time} of an assignment $(p_{i,r})_{i,r}$, is:
$$t_\textit{end}=\max_{r\in \Resources} \sum_{i \in K} \sigma_{K, i} \cdot p_{i,r}$$

The minimal execution time over all valid assignments is denoted $t(K)$ (obtained using an \emph{optimal assignment}).
\end{restatable}

\begin{restatable}[Conjunctive port mapping]{definition}{defconjunctive}
\label{def:conjunctive}
A conjunctive port mapping is a bipartite \emph{weighted} graph $(I,\Resources,E,$ $\rho_{I,\Resources})$ where:
$I$ represents the set of instructions;
$\Resources$ represents the set of \emph{abstract resources};
$E \subset I\times \Resources$ expresses the required use of abstract resources for each instruction;

Each abstract resource $r\in \Resources$ has a (normalized) throughput of 1;
An instruction $i$ that uses a resource $r$ ($(i,r)\in E$) always uses the same proportion (number of cycles, possibly lower/greater than 1) $\rho_{i,r}\in \Q^+$;
If $i$ does not use $r$, then $\rho_{i,r}=0$.

Let $K=I_1^{\sigma_{K,I_1}}I_2^{\sigma_{K,I_2}}\cdots I_m^{\sigma_{K,I_m}}$ be a microkernel.
In a steady state execution of $K$, for each loop iteration, instruction $i$ must use resource $r$ $(\sigma_{K,i}\rho_{i,r})$ cycles.

The number of cycles required to execute one loop iteration is:
$$t(K)=\max_{r\in \Resources} \left(\sum_{i\in K} \sigma_{K,i}\cdot\rho_{i,r}\right)$$
\end{restatable}

One should observe that Def.~\ref{def:conjunctive} defines formally a \emph{normalized} version of the graph used in the illustrative example of Sec.~\ref{subsec:dual}; where throughputs of abstract resources are set to 1.
For the sake of clarity, we used non-normalized throughputs (that is, different than 1) in Fig.~\ref{fig:resource-mapping} with the following notations: \emph{use} stands for the non-normalized usage, and \emph{load} for the normalized $\rho_{i,r}$, equal to $\frac{\#use_i}{throughput(r)}$. For example, \vcv{} uses 2 times $r_{01}$, which has a throughput of 2: its load $\rho_{\vcv,r_{01}}$ is equal to $1$.
Similarly, $\rho_{\addss,r_{016}}=1/3$.

\begin{restatable}[Throughput]{definition}{defthroughput}
\label{def:throughput}
The throughput $\ipc{K}$ of a microkernel $K$ is its instruction per cycle rate (IPC), defined as:
\begin{align*}
    \bullet \qquad \ipc{K} =  \frac{|K|}{t(K)} = & \max_{\textrm{valid assignment $p$}}\left(\frac{\sum_{i\in K} \sigma_{K,i}}{\max_{r\in \Resources} \sum_{i \in K} \sigma_{K, i} \cdot p_{i,r}}\right) \\
    &\qquad\text{for a disjunctive port mapping.}\\
    \bullet \qquad \ipc{K} = \frac{|K|}{t(K)} = & \frac{\sum_{i\in K} \sigma_{K,i}}{\max_{r\in \Resources} \sum_{i\in K} \sigma_{K,i}\cdot\rho_{i,r}} \\
    & \qquad \text{for a conjunctive port mapping}
\end{align*}
\end{restatable}
\vspace{-0.5cm}

\paragraph*{\bf Example} Given $a$ and $b$ two instructions, $a^{\ipc{a}}b^{\ipc{b}}$ represents a microkernel repeating $a$ and $b$ as many times as their respective IPC $\ipc{a}$ and $\ipc{b}$.
We note its throughput $\ipc{a^{\ipc{a}}b^{\ipc{b}}}$.

\begin{restatable}[$\mathcal r$-dual conjunctive port mapping]{definition}{defdual}
\label{def:dual}
Let $(V,$ $\Resources,E)$ be a disjunctive port mapping.
Let $\mathcal r$ be a non-empty set of subsets of $\Resources$.
We define its $\mathcal r$-dual, a conjunctive port mapping, as $(V,\overline{\Resources},\overline{E})$ such that:
$$\begin{array}{c}
\overline{\Resources} = \left\{ \overline{r}_J,\ J\in {\mathcal r} \right\}\\
\overline{E} = \left\{ (v,\overline{r}_J) \textrm{ s.t. } \{ r, (v,r)\in E \} \subseteq J \right\}\\
\rho(\overline{r}_J)= \sum_{r_j\in J}\rho(r_j)=\left| J \right|
\end{array}$$

Then, we can normalize this graph by adding weights to edges, and update the resource throughput, noted $\rho^N$ as it is normalized.
$$\begin{array}{l}
\rho^N_{i,\overline{r}_J} = \left\{\begin{array}{ll}
    1 / \rho(\overline{r}_J) & \textrm{ if } (i,\overline{r}_J)\in\overline{E}\\
    0  & \textrm{ else}
\end{array}\right.\\
\rho^N(\overline{r}_J) = 1
\end{array}$$
\end{restatable}

\subsection{Equivalence between Disjunctive and Conjunctive formulations}
This section provides the main intuition to understand the equivalence between the disjunctive and the conjunctive form.

\begin{restatable}[Saturated port set]{definition}{defsaturating}
  \label{def:saturating}
  Consider a microkernel $K$. Let $(p_{i,r})_{i,r}$ be a valid assignment of $K$ for a disjunctive port mapping $(V,\Resources, E)$. Its saturated port set $\mathcal{S}$ is defined as follows:
$$\mathcal{S}=\left\{r_s \textrm{ such that }  t_\textit{end} = \sum_{i \in K} \sigma_{K, i} \cdot p_{i,r_s} \right\}$$

That is, resources $r_S$ for which their loads $\sum_{i \in K} \sigma_{K, i} \cdot p_{i,r_s}$ correspond to $|K|/\ipc{K}$, the steady state execution time of $K$.
\end{restatable}

\begin{restatable}[Saturated set assumption]{lemma}{SaturatingS}
\label{lem:saturatingS}
Let $(p_{i,r})_{i,r}$ be a valid assignment for a microkernel $K$ in a disjunctive port mapping $(V,\Resources, E)$ and $\mathcal{S}$ its saturated set. Let $r_s$ and $r_t$ be two resources such that $(v,r_s) \in E$ and $(v,r_t) \in E$, we assume $r_s\in \mathcal{S}$ and $r_t \not\in \mathcal{S}$.

Then, either there exists a faster valid assignment for which both resources $r_s$ and $r_t$ are saturated, or there exists a valid assignment whose saturated set is strictly smaller than $p$.
\end{restatable}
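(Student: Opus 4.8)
The plan is to prove the dichotomy constructively, by \emph{diverting load off the saturated resource $r_s$ onto the unsaturated $r_t$} along an instruction of $K$ whose \muop is $v$. Writing $\ell_r=\sum_{i\in K}\sigma_{K,i}p_{i,r}$ for the load of $r$ under $p$, I first note that $\ell_{r_s}=t_\textit{end}>0$ since $r_s\in\mathcal{S}$, so some instruction of $K$ loads $r_s$; by the hypothesis on $v$ I take such an instruction $I_i$ with $v_i=v$ and $p_{i,r_s}>0$. For $\delta\ge 0$ I would define the assignment $p^{(\delta)}$ obtained from $p$ by decreasing $p_{i,r_s}$ and increasing $p_{i,r_t}$ by matching amounts whose $\sigma_{K,i}$-weighted total is $\delta$ (aggregating over all instructions with \muop $v$, should $I_i$ alone not carry enough). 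Since $(v,r_t)\in E$ and each affected row still sums to one, $p^{(\delta)}$ is valid for $\delta$ up to the available mass $\delta_{\max}=\sum_{i:\,v_i=v}\sigma_{K,i}p_{i,r_s}$; under $p^{(\delta)}$ the load of $r_s$ equals $t_\textit{end}-\delta$, that of $r_t$ equals $\ell_{r_t}+\delta$, and all other loads are unchanged.

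Next I would branch on $\lvert\mathcal{S}\rvert$. If $\lvert\mathcal{S}\rvert\ge 2$, then any small $\delta>0$ pulls $\ell_{r_s}$ below $t_\textit{end}$ while some resource of $\mathcal{S}\setminus\{r_s\}$ stays at $t_\textit{end}$ and $r_t$ is still below it; hence $p^{(\delta)}$ has the same execution time $t_\textit{end}$ and its saturated set is exactly $\mathcal{S}\setminus\{r_s\}\subsetneq\mathcal{S}$ --- the second alternative. If $\mathcal{S}=\{r_s\}$, let $M=\max_{r\ne r_s}\ell_r<t_\textit{end}$; I would raise $\delta$ to the first value $\delta^\star$ at which an ``event'' occurs: the falling load of $r_s$ meets the rising load of $r_t$ (at $\delta_{\mathrm{eq}}=(t_\textit{end}-\ell_{r_t})/2$), or it meets the load of some other resource, or $\delta$ reaches $\delta_{\max}$. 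In every case the execution time of $p^{(\delta^\star)}$ is strictly below $t_\textit{end}$, so $p^{(\delta^\star)}$ is strictly faster; and when the first event is the meeting of $r_s$ and $r_t$, they share the new execution time, i.e.\ both are saturated in $p^{(\delta^\star)}$ --- the first alternative.

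The genuine difficulty is the residual sub-cases of the $\mathcal{S}=\{r_s\}$ branch, namely when a third resource becomes co-dominant or the transferable mass on $r_s$ is exhausted before $r_t$ catches up: one then has a strictly faster valid assignment but not literally one in which $r_t$ is saturated, so the move must be re-applied to whatever resource is now the bottleneck, along a \muop that still straddles the current saturated set. If at some stage no such straddling \muop exists, the bottleneck is ``forced'' and the current assignment already realises the first alternative. I expect the clean way to make this iteration terminate is to state and use the lemma for a \emph{minimal} valid assignment (smallest execution time, then smallest saturated set) and to read the diversions above as a strict monovariant, so that the straddling configuration of the hypothesis cannot occur at a minimal assignment --- which is the form in which this lemma feeds into the equivalence proof.
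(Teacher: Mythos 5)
Your construction hinges entirely on its first step: from $r_s\in\mathcal{S}$ you ``take an instruction $I_i$ with $v_i=v$ and $p_{i,r_s}>0$''. That does not follow from the hypotheses. The lemma only assumes the \emph{edges} $(v,r_s)\in E$ and $(v,r_t)\in E$; the load that saturates $r_s$ may be placed entirely by other \muops, with every instance of $v$ assigned elsewhere, i.e.\ $p_{i,r_s}=0$ for all $i$ with $v_i=v$. In that case there is no mass to divert along $v$ and your transfer never starts. The point is not cosmetic: take $K=a^2b$ where the \muop of $a$ maps only to $r_s$ and the \muop $v$ of $b$ maps to $\{r_s,r_t\}$, and let $p$ put $b$ entirely on $r_t$. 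Then $t_\textit{end}=2$, $\mathcal{S}=\{r_s\}$ and $r_t\notin\mathcal{S}$, yet every valid assignment has execution time $2+p_{b,r_s}\ge 2$ and saturated set exactly $\{r_s\}$, so no faster assignment and no assignment with a strictly smaller saturated set exists at all. So the diversion argument genuinely needs the additional hypothesis that the given assignment routes a positive fraction of $v$ to $r_s$ (i.e.\ $r_s\in R_i(p)$ for some $i$ with $v_i=v$); that support-level hypothesis is also the form in which the statement is consumed by Corollary~\ref{cor:saturatingA} and by the equality case of Theorem~\ref{thm:dual}, where what matters is $\sum_{r\in\mathcal{S}}p_{i,r}$ rather than the mere existence of edges. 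You introduced it silently as if it were a consequence of saturation, which it is not (the paper states the lemma without giving its proof here, so this has to be judged on the mathematics alone).

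Second, even granting that hypothesis, your $\mathcal{S}=\{r_s\}$ branch only yields a strictly faster assignment; the clause ``both $r_s$ and $r_t$ are saturated'' is obtained in only one of your three ``events'', and you concede that the remaining sub-cases would need an iterated re-balancing whose termination you do not establish. As written, this is an unfinished proof of the literal statement. Note, however, that the downstream uses never need the ``both saturated'' refinement: to derive the corollary one only needs ``strictly faster'' (contradicting optimality of $p$) or ``same execution time with strictly smaller saturated set'' (contradicting minimality of $|\mathcal{S}|$), and your $|\mathcal{S}|\ge 2$ branch together with the small-$\delta$ argument when $|\mathcal{S}|=1$ already delivers exactly that. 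The productive fix is therefore the one you gesture at in your last paragraph: make the support hypothesis explicit, weaken the first alternative to ``strictly faster'', and drop the chase for simultaneous saturation instead of trying to make the iteration terminate.
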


A direct consequence of this lemma is:

\begin{restatable}[Saturating assignment]{corollary}{saturatingA}
\label{cor:saturatingA}
Let us consider an optimal assignment $(p_{i,r})_{i,r}$ of a list of \muops $K$ on a disjunctive port mapping $(V, \Resources , E)$, such that the size of its saturated set $\mathcal{S}$ is minimal. For all $v\in V$ such that there are $(r_x, r_y) \in \Resources^2$ connected to $v$ (i.e. $(v,r_x) \in E$ and $(v, r_y) \in E$): if $r_x\in\mathcal{S}$, then $r_y \in \mathcal{S}$.

Thus:
$   \forall i \in \Instructions,\ \left[ R_i(p) \subset \mathcal{S} \Leftrightarrow \{r,\ (v_i,r)\in E\} \subset \mathcal{S} \right]$
\end{restatable}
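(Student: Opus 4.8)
The goal is to prove Corollary~\ref{cor:saturatingA}, which asserts that in an optimal assignment with minimal saturated set, any \muop connected to a saturated resource has \emph{all} its resources saturated, and consequently $R_i(p)\subset\mathcal{S}$ iff every resource compatible with $v_i$ lies in $\mathcal{S}$.

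\medskip

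\textbf{Plan.} The corollary should follow almost immediately from Lemma~\ref{lem:saturatingS} (Saturated set assumption) by a minimality argument. The plan is to argue by contradiction. Suppose $(p_{i,r})_{i,r}$ is an optimal assignment whose saturated set $\mathcal{S}$ has minimal cardinality among all optimal assignments, and suppose toward a contradiction that there is a \muop $v$ with two compatible resources $r_x,r_y$ such that $r_x\in\mathcal{S}$ but $r_y\notin\mathcal{S}$. Apply Lemma~\ref{lem:saturatingS} with $r_s=r_x$ and $r_t=r_y$. The lemma gives two alternatives: either there is a \emph{faster} valid assignment saturating both $r_x$ and $r_y$, or there is a valid assignment with a \emph{strictly smaller} saturated set. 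The first alternative contradicts optimality of $p$ (no assignment can be faster than optimal). The second alternative: I need to check that the new assignment is still optimal — but any valid assignment has execution time $\geq t(K)$, and the one produced has execution time $\leq t_\textit{end}=t(K)$ (the lemma's construction does not slow things down; in the "strictly smaller saturated set" branch the execution time is unchanged), hence it is also optimal, and it has a strictly smaller saturated set, contradicting minimality of $\mathcal{S}$. Either way we reach a contradiction, so no such $v,r_x,r_y$ exists: for every \muop $v$ and every pair of compatible resources $r_x,r_y$, $r_x\in\mathcal{S}\implies r_y\in\mathcal{S}$.

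\medskip

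\textbf{Deriving the final equivalence.} Fix an instruction $I_i\in K$ with single \muop $v_i$, and write $C_i=\{r:(v_i,r)\in E\}$ for the set of resources compatible with $v_i$; recall $R_i(p)=\{r:p_{i,r}\neq 0\}\subseteq C_i$ by validity. The implication $C_i\subset\mathcal{S}\Rightarrow R_i(p)\subset\mathcal{S}$ is trivial since $R_i(p)\subseteq C_i$. For the converse, assume $R_i(p)\subset\mathcal{S}$ and pick any $r\in R_i(p)$ (nonempty since $\sum_{r}p_{i,r}=1$), so $r\in\mathcal{S}$. Then for every other $r'\in C_i$, $r$ and $r'$ are both compatible with $v_i$ and $r\in\mathcal{S}$, so by the just-proved property $r'\in\mathcal{S}$. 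Hence $C_i\subset\mathcal{S}$, establishing $R_i(p)\subset\mathcal{S}\iff\{r:(v_i,r)\in E\}\subset\mathcal{S}$.

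\medskip

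\textbf{Main obstacle.} The substantive content is all in Lemma~\ref{lem:saturatingS}, which we are permitted to assume; given it, the corollary is a short dichotomy-plus-minimality argument. The one subtlety to get right is the bookkeeping in the contradiction step: one must verify that in the "strictly smaller saturated set" branch the produced assignment is still \emph{optimal} (execution time still equal to $t(K)$), so that it is a legitimate competitor for the minimality of $\mathcal{S}$; and in the "faster" branch that "faster" genuinely means execution time strictly below $t(K)$, which is impossible by definition of $t(K)$. I would state these two observations explicitly rather than wave at them, since they are the only places the argument could slip.
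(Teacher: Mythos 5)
Your argument is correct and is exactly the route the paper intends: the paper offers no separate proof, presenting the corollary as a ``direct consequence'' of Lemma~\ref{lem:saturatingS}, namely the dichotomy-plus-minimality contradiction you spell out (the ``faster'' branch contradicts optimality of $t(K)$, the ``strictly smaller saturated set'' branch contradicts minimality of $|\mathcal{S}|$), followed by the easy derivation of the equivalence from $R_i(p)\subseteq\{r:(v_i,r)\in E\}$ and $\sum_r p_{i,r}=1$. Your explicit flagging that the smaller-saturated-set branch must not increase the execution time (so the competitor is still optimal) is a legitimate point of care that the paper's terse statement leaves implicit, but it does not change the approach.
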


\begin{restatable}[Equivalence of $\mathcal{r}$-duality]{theorem}{thmdual_real}
\label{thm:dual_real}
  Let $K$ be a microkernel.
Let $(V,\Resources,E)$ (with the set of resources $\Resources=\{r_j\}_j$), $\mathcal r$ a set of subsets of $\Resources$, and $(V,\overline{\Resources},\overline{E})$ (with the set of resources $\overline{\Resources}$ also denoted $\{\overline{r}_J\}_{J\in {\mathcal r}}$) be its ${\mathcal r}$-dual.

(i) Let $(p_{i,r})_{i,r}$ be a valid optimal assignment (i.e. of minimal execution time and minimal saturated set size) of $K$ with regard to $(V,\Resources,E)$. This assignment can be translated into its $\mathcal{r}$-dual, with no change to its execution time. In other words, $\overline{t}(K)\leq t(K)$.

(ii) If $\mathcal r$ is the set of \emph{all} subsets of $\Resources$ then $\overline{t}(K)=t(K)$.
\end{restatable}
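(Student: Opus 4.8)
The plan is to establish the two inequalities $\overline{t}(K) \leq t(K)$ and $\overline{t}(K) \geq t(K)$ separately, the first being part (i) and the combination of the first with its converse giving part (ii). For part (i), I would start from an optimal disjunctive assignment $(p_{i,r})_{i,r}$ of $K$ for $(V,\Resources,E)$ whose saturated set $\mathcal{S}$ has minimal size, so that Corollary~\ref{cor:saturatingA} applies. The idea is to construct the dual assignment by "collapsing" each instruction onto the single abstract resource $\overline{r}_J$ with $J = \{r : (v_i,r)\in E\}$ — this is legal because $\overline{E}$ was defined precisely so that $(v_i,\overline{r}_J)\in\overline{E}$ for this $J$. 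I would then need to check that the load on each $\overline{r}_J$ in the conjunctive model equals the execution time: concretely, in the non-normalized picture, the usage of $\overline{r}_J$ is $\sum_{i : \{r:(v_i,r)\in E\}\subseteq J}\sigma_{K,i}$, divided by $\rho(\overline{r}_J)=|J|$ to get the normalized load. The key computation is to show $\max_{\overline{r}_J}\bigl(\tfrac{1}{|J|}\sum_{i : J_i\subseteq J}\sigma_{K,i}\bigr) \leq t(K)$. This follows by taking $J = \mathcal{S}$: by Corollary~\ref{cor:saturatingA}, the instructions contributing to $\overline{r}_{\mathcal S}$ are exactly those with $R_i(p)\subseteq\mathcal S$, hence those with all their ports in $\mathcal S$; the total disjunctive load on $\mathcal S$ from such instructions is at most $|\mathcal S|\cdot t(K)$ since each port in $\mathcal S$ carries load exactly $t(K)$ in an optimal saturated assignment, so dividing by $|\mathcal S|=|J|$ gives $\leq t(K)$. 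For an arbitrary $J$ one would either bound via the same argument or observe that other $\overline{r}_J$ get no more mass.

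For part (ii), when $\mathcal r$ is the set of \emph{all} subsets of $\Resources$, I would prove $\overline{t}(K)\geq t(K)$, i.e. that the conjunctive model cannot be strictly faster. The natural route is to take an optimal conjunctive assignment — but the conjunctive model has no scheduling freedom, so $\overline{t}(K)$ is just the fixed formula $\max_{J}\tfrac{1}{|J|}\sum_{i:J_i\subseteq J}\sigma_{K,i}$. So I must show this quantity is at least $t(K)$, the minimal disjunctive execution time over all valid assignments. Given any valid disjunctive assignment $p$ with saturated set $\mathcal S$, its execution time $t_{\text{end}} = \sum_i \sigma_{K,i} p_{i,r}$ for each $r\in\mathcal S$; summing over $r\in\mathcal S$ gives $|\mathcal S|\, t_{\text{end}} = \sum_{r\in\mathcal S}\sum_i\sigma_{K,i}p_{i,r} = \sum_i \sigma_{K,i}\sum_{r\in\mathcal S}p_{i,r} \leq \sum_{i: R_i(p)\subseteq\mathcal S}\sigma_{K,i}$ (since $\sum_{r\in\mathcal S}p_{i,r}\leq 1$, with equality only when $R_i(p)\subseteq\mathcal S$), and further $\leq \sum_{i : J_i\subseteq\mathcal S}\sigma_{K,i}$ where $J_i=\{r:(v_i,r)\in E\}$, using $R_i(p)\subseteq J_i$. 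Since $\mathcal S$ is one of the subsets in $\mathcal r$, this shows $\overline{t}(K)\geq \tfrac{1}{|\mathcal S|}\sum_{i:J_i\subseteq\mathcal S}\sigma_{K,i}\geq t_{\text{end}}$. Applying this to the optimal $p$ (where $t_{\text{end}}=t(K)$) gives $\overline{t}(K)\geq t(K)$, and with (i) we conclude equality.

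The main obstacle I anticipate is the careful bookkeeping in part (i): showing that collapsing to $\overline{r}_{\mathcal S}$ actually realizes the execution time rather than exceeding it requires using the \emph{minimality} of $\mathcal S$ through Corollary~\ref{cor:saturatingA} to identify exactly which instructions load $\overline{r}_{\mathcal S}$ (equivalence of "$R_i(p)\subseteq\mathcal S$" with "all ports of $v_i$ lie in $\mathcal S$"), and then checking that \emph{no other} abstract resource $\overline{r}_J$ receives a load exceeding $t(K)$ — this last point needs an argument that for any $J$, the relevant instruction mass $\sum_{i:J_i\subseteq J}\sigma_{K,i}$ is at most $|J|\cdot t(K)$, which again comes from summing the disjunctive loads $p_{i,r}\leq 1$ over $r\in J$ and using that each port's total load is at most $t_{\text{end}}=t(K)$ in the optimal assignment. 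Assembling these pieces cleanly, and handling the normalization $\rho^N_{i,\overline{r}_J}=1/|J|$ versus the non-normalized picture consistently, is where the proof will demand the most care.
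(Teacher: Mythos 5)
Your part~(i) is essentially the paper's argument and is sound: for each abstract resource $\overline{r}_J$, the instructions loading it are exactly those with $J_i=\{r:(v_i,r)\in E\}\subseteq J$, for which $\sum_{r\in J}p_{i,r}=1$, so $\sum_{i:J_i\subseteq J}\sigma_{K,i}\le\sum_{r\in J}\sum_i\sigma_{K,i}\,p_{i,r}\le |J|\,t(K)$, and dividing by $|J|$ bounds every load by $t(K)$. Note that this needs neither the minimality of $\mathcal{S}$ nor Corollary~\ref{cor:saturatingA}, contrary to what you flag as the main obstacle; the corollary is not where you put it, it is needed in part~(ii).

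The genuine gap is in your part~(ii). The step $\sum_i\sigma_{K,i}\sum_{r\in\mathcal{S}}p_{i,r}\le\sum_{i:R_i(p)\subseteq\mathcal{S}}\sigma_{K,i}$ is false for a general valid assignment: an instruction whose $R_i(p)$ straddles $\mathcal{S}$ (some weight on a saturated port, some outside) contributes a strictly positive amount to the left-hand side and nothing to the right-hand side. Your parenthetical justification ("$\sum_{r\in\mathcal{S}}p_{i,r}\le1$, with equality only when $R_i(p)\subseteq\mathcal{S}$") only yields the weaker bound $\sum_{i:R_i(p)\cap\mathcal{S}\ne\emptyset}\sigma_{K,i}$, from which the subsequent passage to $\sum_{i:J_i\subseteq\mathcal{S}}\sigma_{K,i}$ does not follow. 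The chain is valid precisely when applied to an \emph{optimal} assignment whose saturated set has \emph{minimal} size, because then Corollary~\ref{cor:saturatingA} excludes straddling: for every instruction either $J_i\subseteq\mathcal{S}$ (so $\sum_{r\in\mathcal{S}}p_{i,r}=1$) or $J_i\cap\mathcal{S}=\emptyset$ (so the sum is $0$). This is exactly how the paper proves (ii): it fixes that assignment, shows both inequalities become equalities for $J=\mathcal{S}$ (saturation gives $\sum_{r\in\mathcal{S}}\sum_i\sigma_{K,i}p_{i,r}=|\mathcal{S}|\,t(K)$, and the corollary gives $\delta_{J_i\subseteq\mathcal{S}}=\sum_{r\in\mathcal{S}}p_{i,r}$), hence the load of $\overline{r}_{\mathcal{S}}$ equals $t(K)$ and, combined with (i), $\overline{t}(K)=t(K)$. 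So the ingredient you need is one you do mention, but you attach it to the wrong half of the proof, and as written the key inequality of (ii) does not stand on its own.
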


\begin{restatable}[Equivalence]{theorem}{thmdual}
\label{thm:dual}Let $K$ be a microkernel and $(V,\Resources,E)$ (with the associated throughput function $t$) be a disjunctive port mapping. Then, there exists $\mathcal r$ a set of subsets of $\Resources$, and $(V,\overline{\Resources},\overline{E})$ a conjunctive port mapping called the \emph{dual} (with the associated throughput $\overline{t}$) whose set of resources $\overline{\Resources}$ is indexed by ${\mathcal r}$ such that:

(i) Every $(p_{i,r})_{i,r}$ optimal assignment (i.e. of minimal execution time and minimal saturated set size) of $K$ with regard to $(V,\Resources,E)$ can be translated into $(V,\overline{\Resources},\overline{E})$, with no change of its execution time. In other words, $\overline{t}(K)\leq t(K)$.

(ii) If $\mathcal r$ is the set of \emph{all} subsets of $\Resources$ then $\overline{t}(K)=t(K)$.
\end{restatable}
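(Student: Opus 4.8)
The plan is to observe that Theorem~\ref{thm:dual} is essentially a repackaging of Theorem~\ref{thm:dual_real}: the $r$-dual construction of Def.~\ref{def:dual} already produces a conjunctive port mapping from any disjunctive one together with a choice of $\mathcal r$, so the only thing left to do is to \emph{exhibit} a suitable $\mathcal r$ and then invoke Theorem~\ref{thm:dual_real} verbatim. Concretely, I would take $\mathcal r$ to be the power set of $\Resources$ (minus $\emptyset$), set $(V,\overline{\Resources},\overline{E})$ to be the $r$-dual for this choice, and note that part~(i) of Theorem~\ref{thm:dual_real} gives $\overline t(K)\le t(K)$ while part~(ii) gives $\overline t(K)=t(K)$ precisely because $\mathcal r$ was chosen to be \emph{all} subsets. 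So the proof body is two sentences plus a pointer.

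First I would recall the $r$-dual from Def.~\ref{def:dual}: given $(V,\Resources,E)$, each subset $J\in\mathcal r$ gives an abstract resource $\overline r_J$ with $\rho(\overline r_J)=|J|$, and a \muop $v$ is connected to $\overline r_J$ exactly when all ports reachable from $v$ lie inside $J$. Then I would state the choice $\mathcal r=2^{\Resources}\setminus\{\emptyset\}$ (or simply ``all subsets'', matching the phrasing in part~(ii)), define $\overline{\Resources}$, $\overline E$ accordingly, and normalize as in Def.~\ref{def:dual} so that each abstract resource has throughput~$1$ as required by Def.~\ref{def:conjunctive}. Part~(i) is then \emph{exactly} Theorem~\ref{thm:dual_real}(i) applied to this $\mathcal r$: any optimal assignment $(p_{i,r})$ of minimal saturated-set size translates into the dual with unchanged execution time, yielding $\overline t(K)\le t(K)$. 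Part~(ii) is Theorem~\ref{thm:dual_real}(ii) for the same $\mathcal r$, giving $\overline t(K)=t(K)$.

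The only genuine content hiding behind this bookkeeping is what was already proved upstream, namely Lemma~\ref{lem:saturatingS} and Corollary~\ref{cor:saturatingA}: the translation in Theorem~\ref{thm:dual_real}(i) works because one may assume the chosen optimal assignment has a \emph{minimal} saturated set $\mathcal S$, which by Cor.~\ref{cor:saturatingA} forces every \muop whose reachable port set meets $\mathcal S$ to have its \emph{entire} reachable port set inside $\mathcal S$ — so the abstract resource $\overline r_{\mathcal S}$ is used by exactly those instructions and saturated by them, pinning down $\overline t(K)$ to equal $t(K)$ on that side. Since the excerpt permits me to assume everything stated before the final theorem, I would not re-derive this; I would merely remark that the minimality of the saturated set is the load-bearing hypothesis and that it is supplied by Cor.~\ref{cor:saturatingA}.

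The main obstacle, such as it is, is purely expository: making crisp that Theorem~\ref{thm:dual} is the ``there exists $\mathcal r$'' existential wrapper around the ``for a given $\mathcal r$'' statement of Theorem~\ref{thm:dual_real}, and checking that the \emph{same} $\mathcal r$ (all subsets) simultaneously witnesses both~(i) and~(ii) — which it does, since~(i) holds for \emph{every} $\mathcal r$ and~(ii) is stated for this particular one. No new estimate or combinatorial argument is needed; the proof is a two-line reduction, and I would write it as such rather than reproving Theorem~\ref{thm:dual_real}.
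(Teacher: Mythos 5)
Your reduction is correct and is essentially the paper's own route: Theorem~\ref{thm:dual} is precisely the existential instantiation of Theorem~\ref{thm:dual_real} with $\mathcal r$ taken to be all subsets of $\Resources$ (part~(i) holding for any $\mathcal r$, part~(ii) for this particular choice), and you correctly identify Corollary~\ref{cor:saturatingA} together with the minimality of the saturated set as the load-bearing ingredient behind the equality case. The only caveat is that in the paper the detailed argument (the averaging bound over a subset $J$, the inequality $\delta_{\{r,(v_i,r)\in E\}\subseteq J}\le\sum_{r\in J}p_{i,r}$, and the equality case obtained by choosing $J=\mathcal S$) is printed as the proof attached to Theorem~\ref{thm:dual} itself rather than to Theorem~\ref{thm:dual_real}; citing the latter is legitimate and non-circular since it is stated earlier and its proof does not rely on Theorem~\ref{thm:dual}, but a fully self-contained write-up would have to reproduce that chain of inequalities rather than defer to it.
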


\begin{proof}
    (i) Let $p$ be an optimal valid assignment for a list of \muops $K$ on a disjunctive port mapping $(V, \Resources, E)$, which minimizes its saturated set size $|\mathcal{S}|$.

    From the definition of an execution time, we have: $$\forall r \in \Resources,\ \sum_{i\in K} \sigma_{K,i} \cdot p_{i,r} \leq t(K)$$

    Hence, for any subset of resources $J \subset \Resources$, $$\sum_{r\in J} \sum_{i\in K} \sigma_{K,i} \cdot p_{i,r} \leq t(K) \cdot |J|$$

    Thus,
    \begin{equation}
    \forall J \subset \Resources,\ \frac{\sum_{r\in J} \sum_{i\in K} \sigma_{K,i} \cdot p_{i,r}}{|J|} \leq t(K) \label{eq:sum}
    \end{equation}
    Notice that this is an equality when $J$ is a subset of a saturated set $\mathcal{S}$ of any optimal placement $(p_{i,r})_i$. Indeed, by the definition of the saturated set (definition~\ref{def:saturating}) we have that for each $r_s\in\mathcal{S}$, $t(K) = \sum_{i\in K} \sigma_{K,i} \cdot p_{i,r}$.
    \bigskip
    We will now prove that $\overline{t}(K) \leq t(K)$. Consider any combined port $\overline{r}_J \in \overline{\Resources}$ whose throughput is $\rho(\overline{r}_J)=|J|$. For any $\overline{r}_J$, by the definition of the dual:
    \begin{align*}
        \sum_{i\in K} \sigma_{K,i} \cdot \rho^{N}_{i,\overline{r}_J} & = \sum_{i\in K} \sigma_{K,i} \cdot \frac{ \delta_{(v_i,\overline{r}_J) \in \overline{E}} }{ |J| } \\
         & = \sum_{i\in K} \sigma_{K,i} \cdot \frac{ \delta_{\{r, (v_i,r)\in E\} \subseteq J} }{ |J| }
    \end{align*}
    where $\delta_{stat} = 1$ if the statement $stat$ is true, and $\delta_{stat} = 0$ if it is false.

    Now, let us show that for any assignment $(p_{i,r})_{i,r}$ in the disjunctive graph, we have:
    \begin{equation}
        \delta_{\{r, (v_i,r)\in E\} \subseteq J} \leq \sum_{r\in J} p_{i,r}
        \label{eq:delta_thm3_1}
    \end{equation}
    In order to prove this statement, we consider the two cases on the value of $\delta$:
    \begin{itemize}
        \item If $\{r, (v_i,r)\in E\} \not\subseteq J$, then the $\delta=0$. Because $p_{i,r}$ are positive values by definition, this inequality is trivially satisfied.

        \item If $\{r, (v_i,r)\in E\} \subseteq J$, then all the neighbors of $r$ in the disjunctive graph are in $J$. Thus, $R_i(p) \subseteq \{r, (v_i,r)\in E\} \subseteq J$. So, $\sum_{r\in J} p_{i,r}=1$. Notice that in this case, we have an equality.
    \end{itemize}

    Therefore, for any $\overline{r}_J \in \overline{\Resources}$:
    \begin{align*}
        \sum_{i\in K} \sigma_{K,i} \cdot \rho^{N}_{i,\overline{r}_J} & \leq \sum_{i\in K} \sigma_{K,i} \cdot \frac{ \sum_{r\in J} p_{i,r} }{ |J| }\\
        & \leq \frac{ \sum_{r\in J} \sum_{i\in K} \sigma_{K,i} \cdot p_{i,r} }{ |J| }
    \end{align*}

    By using equation (\ref{eq:sum}), we have $\overline{t}(K) \leq t(K)$.

    (ii) Now, assuming that $\mathcal r$ is not limited to a few subsets of $\Resources$, let us prove that $\overline{t}(K) = t(K)$.

    For a given optimal assignment $(p_{i,r})_{i,r}$, let us pick $J=\mathcal{S}$, his saturated set of minimal size. Let us show that for this particular $J$, the previously considered inequalities are equalities.

    As mentioned previously, equation (\ref{eq:sum}) is an equality when $J$ is a subset of the saturated set $\mathcal{S}$. Thus, we only need to show that the inequality (\ref{eq:delta_thm3_1}) is an equality for this $J$.

    Notice that for any instruction $v_i$, if we have an edge $(v_i,r) \in E$ when $r\in J=\mathcal{S}$, then by Corollary~\ref{cor:saturatingA}, we have $\{r, (v_i,r)\in E\} \subseteq J$. Thus, given a $v_i$ we have two situations:
    \begin{itemize}
        \item Either there are no edge from $v_i$ to any $r\in J$, then $\delta_{\{r, (v_i,r)\in E\} \subseteq J}=0$, and $\sum_{r\in J} p_{i,r} = 0$. Thus, we have equality.
        \item Or there are an edge from $v_i$ to a saturated resource $r \in J$. Thus, as mentioned before, $\{r, (v_i,r)\in E\} \subseteq J$ and $\delta_{\{r, (v_i,r)\in E\} \subseteq J} = 1 = \sum_{r\in J} p_{i,r}$. Thus, we also have an equality.
    \end{itemize}

    Therefore, the whole chain of inequality linking $\overline{t}(K)$ to $t(K)$ are equalities. Thus, $\overline{t}(K) = t(K)$.
\end{proof}

We have an equality if $\mathcal{r}$ is the set of all subsets of $\Resources$, whose size is exponential in the number of resources. However, the proof shows that we can restrict ourselves to the saturated set $\mathcal{S}$ of an optimal assignment.

In practice, we build $\mathcal{r}$ by considering the abstract resources that directly correspond to the set of resources that a given \muop can be mapped to in the disjunctive mapping.
Then, we recursively apply this rule: if two abstract resources have a non-empty intersection, we then add their union as a new abstract resource. Intuitively, this new abstract resource introduces a new constraint on the valid assignment in the dual, corresponding to a potential saturation of these resources. We end up with a set containing fewer than 14 elements in our experiments.

  \section{Appendix - Automated Resource Mapping Discovery}
\label{app:ress_map_disc}

This section aims at proving mathematically the algorithms presented in Section~\ref{sec:algos}.

\subsection{Definitions}

We consider a bipartite conjunctive mapping as introduced in Definition~\ref{def:conjunctive}.

\begin{restatable}[Extended bipartite conjunctive mapping]{definition}{defextbipconj}
\label{def:extbipconj}
A bipartite conjunctive port mapping is equivalent to a unique extended form that decouples the use of combined resources as either a consequence of the use of simpler resources, or as the sole use of the combined resource.

Let $(V,\Resources,E)$ be a bipartite conjunctive port mapping. Its extended form is a graph $(V,\Resources,E'\cup B)$ with $B$ the set of \emph{back edges} defined by:
\begin{itemize}
    \item $(r,r')\in B\subset \Resources^2 \Leftrightarrow \forall i,\ w_{i,r'} \geq w_{i,r} \land \rho(r') > \rho(r)$.\\
    Then, $w_{r,r'}=1$, and $(r,r') \in B$ is said to be a \emph{back edge}.
    \item $\forall (v,r), (v,r, w_{v,r}') \in E' \Leftrightarrow  (v,r, w_{v,r}) \in E$ with weight $w_{v,r}' = w_{v,r} - \sum_{r'\neq r} w_{v,r}\cdot w_{r,r'} > 0$.\\
    If $w_{v,r}'$ is reduced to $0$, the edge is not in $E'$.
\end{itemize}

An illustrative example is given in figure \ref{fig:extended_from}.
\end{restatable}

\begin{restatable}[Resource usage]{definition}{defresusage}
\label{def:resusage}
Given a bipartite conjunctive port mapping and a microkernel K, we note $w_{K,r}$ the use of the resource $r$ during the execution of K, i.e.
\[w_{K,r}=\sum_{v_i\in K} w_{v_i,r}\]
\end{restatable}

\begin{restatable}[Load of a resource]{definition}{defload}
\label{def:load}
Given the conjunctive port mapping $(V,\Resources,E)$ on the extended form and a microkernel $K$, we note $\load(r)$ the normalized use of the resource $r$ during the execution of the microkernel of execution time $t_{end}$, i.e.
\[\load(r)=\frac{\sum_{v\in K} w_{v,r}+\sum_{r' \in \Resources} w_{K,r'}w_{r',r}}{t_{end}} \]
\end{restatable}

\begin{restatable}[Normalised resource mapping]{definition}{defnormalressmap}
\label{def:normalressmap}
The normalized version $(V,\Resources,E')$ of a conjunctive (or disjunctive) resource mapping is the semantically equivalent resource mapping $(V,\Resources,E)$  where the throughput of every resource has been normalized to 1, thus decreasing the value of the edges:
\[
w^{E'}_{v,r} = \frac{w^E_{v,r}}{\rho(r)}
\]
In the extended resource mapping form, the values of the back-edges become:
\[
w^{B'}_{r,r'} = \frac{w^{B}_{r,r'}}{\rho(r')}
\]
\end{restatable}

\begin{restatable}[Bounds of the normalized back edges]{lemma}{lembackedge}
\label{lem:backedge}
On a normalized bipartite resource mapping on the extended form:
\[
0 \leq w_{r,r'} \leq \frac{1}{2}
\]
\end{restatable}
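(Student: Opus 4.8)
The plan is to unfold the two definitions that govern back edges and reduce the claim to an elementary fact about resource throughputs. First I would recall that, in the (un-normalized) extended form, every back edge $(r,r')\in B$ carries weight $w_{r,r'}=1$, and that passing to the normalized mapping rescales this weight to $w_{r,r'}=w^{B}_{r,r'}/\rho(r')=1/\rho(r')$, where $\rho(r')$ denotes the \emph{original} throughput of $r'$ (the set $B$ itself, and in particular the relation $\rho(r')>\rho(r)$, being fixed by the un-normalized mapping and left unchanged by normalization). Thus the lemma is equivalent to the two inequalities $0\le 1/\rho(r')$ and $1/\rho(r')\le 1/2$.

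The lower bound is immediate: $\rho(r')$ is a throughput, hence strictly positive, so $w_{r,r'}=1/\rho(r')>0\ge 0$.

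For the upper bound I would invoke the defining condition of a back edge, namely $\rho(r')>\rho(r)$, together with the structure of the mappings manipulated by \tool{}: these arise as the $\mathcal r$-dual of a disjunctive port mapping, where every abstract resource $\overline r_J$ has throughput $\rho(\overline r_J)=|J|$. Consequently all resource throughputs are positive integers, and in particular $\rho(r)\ge 1$. From $\rho(r')>\rho(r)\ge 1$ with both quantities integral we obtain $\rho(r')\ge 2$, hence $w_{r,r'}=1/\rho(r')\le 1/2$. I would also note that the bound is tight: a back edge from a single-port resource (throughput $1$) to a resource combining exactly two ports (throughput $2$), e.g.\ $(p_0,r_{01})$, realizes $w_{r,r'}=1/2$.

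The one point requiring care — and the only place where anything beyond unfolding definitions is used — is the justification that $\rho(r')\ge 2$: it rests on the integrality and the lower bound $\rho(\cdot)\ge 1$ of resource throughputs coming from the dual construction $\rho(\overline r_J)=|J|$. For an arbitrary conjunctive mapping with rational throughputs this step fails, so the statement is genuinely about the class of (dual-derived) mappings built in the preceding sections; I would make this dependency explicit in the write-up.
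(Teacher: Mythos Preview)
The paper does not supply a proof of this lemma; it is stated and then immediately used in the proof of Theorem~\ref{thm:CompletnessSatBenchs}. Your argument is the natural one and is correct: unfolding Definitions~\ref{def:extbipconj} and~\ref{def:normalressmap} gives $w_{r,r'}=1/\rho(r')$ in the normalized extended form, and the back-edge condition $\rho(r')>\rho(r)$ together with the integrality of throughputs in the $\mathcal r$-dual (Definition~\ref{def:dual}, where $\rho(\overline r_J)=|J|\ge 1$) forces $\rho(r')\ge 2$, hence $w_{r,r'}\le 1/2$. You are also right to flag explicitly that the upper bound is not a property of arbitrary conjunctive mappings with rational throughputs, but depends on the fact that the resources at hand come from the dual construction; the paper's statement leaves this implicit, and making the hypothesis visible is an improvement.
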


\begin{figure}
    \centering
    \begin{subfigure}{0.45\linewidth}
    \includegraphics[width=\linewidth]{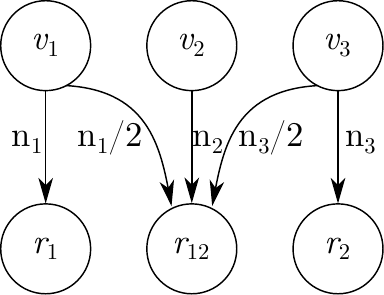}
    \caption{\label{fig:res_example_bi_non_ex}}
    \end{subfigure}
    \hfill
    \begin{subfigure}{0.45\linewidth}
    \includegraphics[width=\linewidth]{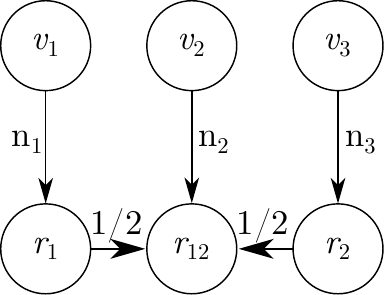}
    \caption{\label{fig:res_example_bi_ex}}
    \end{subfigure}
    \caption{
    \label{fig:extended_from}Conjunctive resource mapping and its extended form; both normalized.}
\end{figure}

\begin{restatable}[$k$-exclusive saturation]{definition}{defkexclsat}
\label{def:kexclsat}
A microkernel $S$ is said to be a $k$-exclusive saturation with $k\in[0,1]$ of a resource $r$ when the maximum of the uses of every resource but $r$ (including the combine resource property) is bounded by $k$, and where $S$ never use a resource that has a back edge with $r$, i.e. when $S$ verifies
\[
\max_{r'\neq r} w_{S,r'} \leq \frac{1 - k}{\ipc{S}}
\]
And
\[
\forall r', w_{r,r'} \neq 0 \Rightarrow w_{S,r'} = 0
\]
We call $S$ an exclusive saturation of $r$ when $S$ is a 1-exclusive saturation of $r$, which means that $S$ only uses the resource $r$.
\end{restatable}

\subsection{Completeness of the Mapping}
\label{App:CompletnessMapping}
We assume that the solver finds an edge $(I, r)$ when we provide a microkernel containing at least once $I$ that saturates $r$. This means that the solver is smart enough to detect that an instruction which uses a resource (and its amount) as long as we provide a benchmark using the instruction limited by it.

\begin{restatable}[Completeness of the output mapping]{theorem}{thmCompletnessSatBenchs}
\label{thm:CompletnessSatBenchs}
Let $r, r_{ST}$ and $r_S$ be three resources, and $S$ and $I$ two microkernels represented as a single vertex combining the use of all their instructions, forming a normalized conjunctive bipartite mapping on the expanded form (see definition~\ref{def:extbipconj}). For the sake of simplicity, we will use greek letters instead of multiple indexes of $\rho$ in this proof:
\begin{itemize}
\item $\ell, \ell'$ (possibly 0) are the back edge from $r_S$ (resp. $r_{ST}$) to $r$ (resp. $r_S$), which  corresponds to $\rho_{r_S,r}$ and $\rho_{r_T,r_S}$, respectively
\item $\alpha, \beta$, and $\gamma$ are used to denote $\rho_{I,r_{ST},}$, $\rho_{I,r_S}$, and $\rho_{I,r}$.
\end{itemize}

Let us assume that $S$ verifies the following properties:
\begin{itemize}
\item $S$ realize a $\frac{1}{4}$-exclusive saturation of $r_S$
\item $S$ uses another resource $r_T$ with a coefficient $\alpha_{ST}$ (noted $\rho_{S,r_T}$ with the former notation)
\item $S$ does not use $r$ apart from the contribution from $r_{ST}$ and $r_T$
\end{itemize}

Then the benchmark $S^{4\cdot\ipc{S}}I^{\ipc{I}}$ saturates $r_S$, thus allowing the solver to find the link $I \to r_S$.
\end{restatable}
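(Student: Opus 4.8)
The statement has a \emph{modelling} half and a \emph{solver} half, and the solver half is exactly the assumption recalled just before the theorem: once the microkernel $K := S^{4\ipc S}I^{\ipc I}$ (i.e.\ $K_\textit{sat}(I,r_S)$ with $L=4$) is known to \emph{saturate} $r_S$ in the normalized, extended conjunctive model, the solver of $\texttt{LP}_\texttt{AUX}$ (Algo.~\ref{alg:final_mapping}), fed with $K$ and its measured IPC, is forced by the constraints $\rho_{K,r_S}\le 1$ and $\rho_{K,r_S}$ being affine in the single free unknown $\rho_{I,r_S}=\beta$ — together with the objective favouring saturation — to add the edge $I\to r_S$ with weight $\beta$ (and to leave it absent when $\beta=0$). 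So the real job is the modelling half. Writing $W_M(r):=\sum_{v\in M}w_{v,r}+\sum_{r'}w_{M,r'}w_{r',r}$ for the effective use of $r$ by a microkernel $M$ on the extended form (the numerator of $\load(r)$ in Def.~\ref{def:load}) and $t(M)=\max_r W_M(r)$ for the modelled execution time, the target reduces to showing $W_K(r_S)=\max_{r\in\Resources}W_K(r)$, i.e.\ $\load(r_S)=1$ for $K$.

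First I would use that $W_\bullet(r)$ is linear in the instruction multiset, so $W_K(r)=W_{S^{4\ipc S}}(r)+W_{I^{\ipc I}}(r)$ and $W_{M^c}(r)=c\,W_M(r)=c\,t(M)\,\load_M(r)$ for any $M,c$, where $\load_M(r)$ is the load of $r$ when $M$ runs alone; since $S$ and $I$ are single vertices, $t(S)=1/\ipc S$ and $t(I)=1/\ipc I$. This yields three elementary facts: (a) $S$ saturates $r_S$, so $W_{S^{4\ipc S}}(r_S)=4\ipc S\cdot 1\cdot\tfrac1{\ipc S}=4$, whence $W_K(r_S)\ge 4$, and strictly $>4$ when $\beta>0$; (b) for every $r$, $W_{I^{\ipc I}}(r)=\load_I(r)\le 1$, since $I$ alone occupies $t(I)$ cycles and no resource is loaded past $1$; (c) for $r\ne r_S$, $W_{S^{4\ipc S}}(r)=4\,\load_S(r)$.

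The whole argument then hinges on bounding $\load_S(r)$ for $r\ne r_S$, which is the only delicate step. If $r$ is not a resource that $r_S$ feeds through a back edge — in particular $r=r_T$, or any resource unrelated to $S$ — then $\tfrac14$-exclusive saturation (Def.~\ref{def:kexclsat}) gives $\load_S(r)\le\tfrac14$ outright, so $W_K(r)\le 4\cdot\tfrac14+1=2<4$. The binding case is the combined / parent resources, here $r_{ST}$ and $r$, which cannot avoid receiving a back-edge feed from the saturated $r_S$ (saturating $r_S$ forces a comparable load on $r_{ST}\supseteq r_S$): on the extended form such a resource carries no residual direct $S$-edge, so $\load_S$ on it consists only of back-edge feeds — the heavy one from $r_S$, at most $\tfrac12\cdot 1$ using $\ell,\ell'\le\tfrac12$ (Lemma~\ref{lem:backedge}) and $\load_S(r_S)=1$, plus the light one from $r_T$, at most $\tfrac12\cdot\tfrac14$ — so $\load_S(r)\le\tfrac58$, $W_{S^{4\ipc S}}(r)\le\tfrac52$, and $W_K(r)\le\tfrac52+1=\tfrac72<4\le W_K(r_S)$. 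Hence $r_S$ strictly attains the maximal effective use, $\load(r_S)=1$ for $K$, and the solver half finishes the proof.

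The main obstacle is exactly this parent-resource estimate, and it is also where the constant $L=4$ is consumed: the unavoidable back-edge feed from $r_S$ into $r_{ST}$ (and $r$) has relative weight up to $\tfrac12$ and is amplified by the \emph{same} factor $L$ as $r_S$ itself, so the separation survives only because $L(1-\tfrac58)=\tfrac{3L}{8}>1$ at $L=4$ (and $\tfrac{3L}{4}>1$ for the non-parent resources). Turning this sketch into a full proof, the two points that need care are: (i) the extended-form bookkeeping — checking that $r_{ST}$ and $r$ genuinely have no direct $S$-edge, and that the $r_S\to r_{ST}\to r$ feeds are counted once and not compounded in $W_K(r)$; and (ii) that ``$\tfrac14$-exclusive saturation'' of $r_S$, in the precise form delivered by the basic-instruction selection of Sec.~\ref{sec:basic}, really does cap the load of every non-parent resource at $\tfrac14$, since that is what steps (a)–(c) above rely on.
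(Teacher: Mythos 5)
Your overall strategy is the same as the paper's: bound the loads of the three resources in the combined kernel $S^{4\ipc{S}}I^{\ipc{I}}$, show $r_S$ carries the maximal load, and then invoke the standing assumption that the solver adds an edge $I\to r_S$ whenever it is given a kernel containing $I$ that saturates $r_S$. However, the quantitative core of your argument is broken by a misreading of Definition~\ref{def:kexclsat}. A $k$-exclusive saturation with $k=\frac14$ bounds the use of every resource other than $r_S$ by $\frac{1-k}{\ipc{S}}=\frac{3}{4\ipc{S}}$, i.e.\ a load of $\frac34$ when $S$ runs alone --- not $\frac14$ (the sanity check is $k=1$, which must force zero use of all other resources, consistent with the $\frac{1-k}{\ipc S}$ formula; the paper's proof accordingly uses $\alpha_{ST}\le\frac{3}{4\ipc{S}}$). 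With the correct constant, your non-parent estimate becomes $W_K(r')\le 4\cdot\frac34+1=4$ rather than $2$, and your parent estimate degrades similarly, so the strict separation ``$<4\le W_K(r_S)$'' on which your whole conclusion rests disappears. The paper obtains strictness from the other side: it shows $\load(r_S)\ge 4+\ipc{I}\,(\ell'\gamma+\alpha)>4$ whenever $I$ loads $r_S$ directly \emph{or} indirectly through the back edge from $r_T$, while $\load(r_T)$ and $\load(r)$ stay bounded by the same threshold. You mention strictness only for direct use and do not make it the linchpin, so after correcting the $\frac14$-vs-$\frac34$ error your argument has no way to conclude.

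There is a second, structural, misreading: you take the back-edge chain to be $r_S\to r_{ST}\to r$ and treat $r_{ST}$ as a parent of $r_S$ with no direct $S$-edge. In the statement the back edges are $\ell'=\rho_{r_T,r_S}$ (so $r_T=r_{ST}$ feeds its load \emph{into} $r_S$) and $\ell=\rho_{r_S,r}$ (so only $r$ is a parent of $r_S$), and $S$ uses $r_T$ directly with coefficient $\alpha_{ST}$ by hypothesis. This matters in both directions: the term $4\ell'\alpha_{ST}\ipc{S}$ (and, for $r$, the compounded $4\ell\ell'\alpha_{ST}\ipc{S}$) must be charged to $\load(r_S)$ and $\load(r)$ in the upper bounds --- this is where the budget created by $L=4$ and the bound $\ell,\ell'\le\frac12$ of Lemma~\ref{lem:backedge} is actually consumed --- and, conversely, the feed $\ell'\gamma$ of $I$'s use of $r_T$ into $r_S$ is exactly what lets the theorem certify the edge even when $I$ touches $r_S$ only indirectly. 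Your concern (ii), whether the basic-instruction selection really delivers a $\frac14$ cap, is beside the point: the hypothesis of the theorem is only a $\frac14$-exclusive saturation, and the proof must (and, in the paper, does) work with the $\frac34$ bound that this hypothesis provides.
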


\begin{figure}
    \centering
    \begin{subfigure}{0.45\linewidth}
    \includegraphics[width=\linewidth]{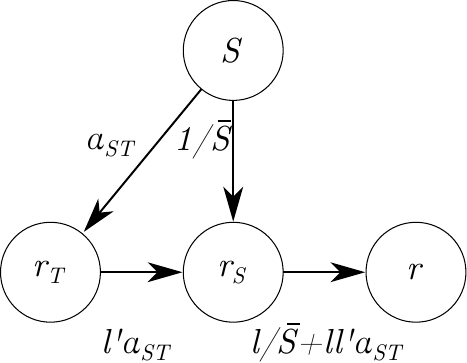}
    \caption{\label{fig:res_sat_proves_S}}
    \end{subfigure}
    \hfil
    \begin{subfigure}{0.45\linewidth}
    \includegraphics[width=\linewidth]{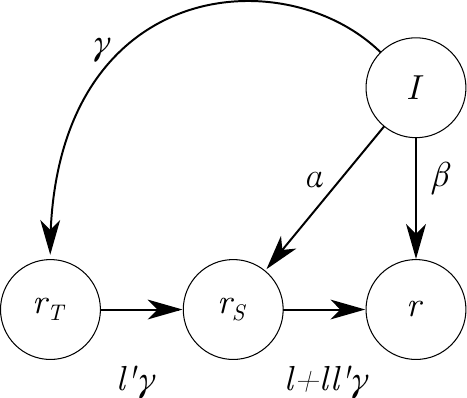}
    \caption{\label{fig:res_sat_proves_I}}
    \end{subfigure}
    \begin{subfigure}{0.45\linewidth}
    \includegraphics[width=\linewidth]{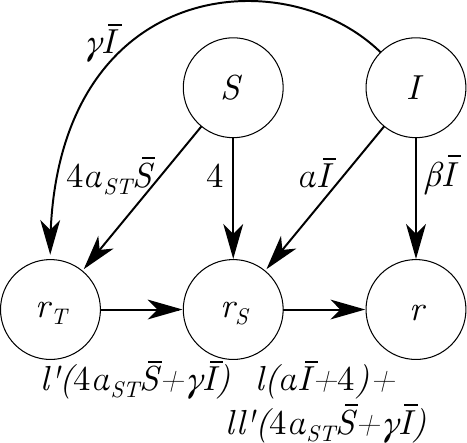}
    \caption{\label{fig:res_sat_proves}}
    \end{subfigure}
    \caption{
    \label{fig:res_sat_proves_general}Saturating benchmarks $S$ and instructions to analyse $I$: individual uses (\ref{fig:res_sat_proves_I} and \ref{fig:res_sat_proves_S}), and benchmark $S^{4\cdot\ipc{S}}I^{\ipc{I}}$ (\ref{fig:res_sat_proves}).
    }
\end{figure}
\begin{proof}
The graph representing the resource usage for one execution of the benchmark $S$ is represented in figure \ref{fig:res_sat_proves_S} and figure \ref{fig:res_sat_proves_I} for $I$. As $S$ realizes a $\frac{1}{4}$-exclusive saturation of $r_S$, then $\rho_{S,r_S} = 1/\ipc{S}$, so that $\ipc{S}$ repetitions of $S$ are needed to load resource $r_S$ with the value 1.

Let us consider the benchmark $S^{4\cdot\ipc{S}}I^{\ipc{I}}$, illustrated in figure \ref{fig:res_sat_proves}.
The load of $r_T$ is:
\[
    \load(r_T) = 4\alpha_{ST}\ipc{S} + \gamma\ipc{I}
\]
The load of $r_S$ is:
\begin{align*}
    \load(r_S) & = \ell'\cdot\load(r_T) + \alpha\ipc{I} + 4\cdot \frac{\ipc{S}}{\ipc{S}}\\
    & = \ell'\cdot(4\alpha_{ST}\ipc{S} + \gamma \ipc{I}) + \alpha\ipc{I} + 4\\
    & = 4\ell'\alpha_{ST}\ipc{S} + \ell'\gamma \ipc{I} + \alpha\ipc{I} + 4
\end{align*}
And the load of $R$ is:
\begin{align*}
    \load(R) & \le \ell\cdot\load(r_S) + \beta\ipc{I}\\
    & =  4\ell'\ell\alpha_{ST}\ipc{S} + \ell'\ell\gamma\ipc{I} + \ell\alpha\ipc{I}  + 4\ell + \beta\ipc{I}
\end{align*}

By lemma \ref{lem:backedge}, $\ell \leq \dfrac{1}{2}$ and $\ell'\leq\dfrac{1}{2}$.

Then
\[
     \load(r) \leq  4\alpha_{ST}\frac{\ipc{S}}{4} + \gamma\frac{\ipc{I}}{4} + \alpha\frac{\ipc{I}}{2}  + \frac{4}{2} + \beta\ipc{I}
\]
But, by definition of the IPC, $\max \left( \alpha, \beta, \gamma \right) = \dfrac{1}{\ipc{I}}$, so
\begin{align*}
     \load(r) & \leq \alpha_{ST}\ipc{S} + \frac{1}{4} + \frac{1}{2}  + \frac{4}{2} + \frac{1}{2}\\
     & \leq \alpha_{ST}\ipc{S} + \frac{13}{4}
\end{align*}

As $S$ realizes a $\frac{1}{4}$-exclusive saturation of $r_S$, then $\alpha_{ST} \leq \frac{3}{4\cdot\ipc{S}}$, so
\begin{align*}
    \load(r) & \leq \frac{3\cdot \ipc{S}}{4\cdot\ipc{S}} + \frac{13}{4}\\
    & \leq 4
\end{align*}

Similarly,
\begin{align*}
    \load(r_T) & \leq 4\alpha_{ST}\ipc{S} + \gamma\ipc{I} \leq 3 + 1\\
    & \leq 4
\end{align*}
To obtain a lower bound on $\load(r_S)$, we use similar bounds: $\alpha \geq 0$ and $\ell' \geq 0$, so
\begin{align*}
    \load(r_S) & = \ell'\cdot(4\alpha_{ST}\ipc{S} + \gamma \ipc{I}) + \alpha\ipc{I} + 4\\
    & \geq 4 + \ipc{I}\cdot (\ell'\gamma + \alpha)
\end{align*}
So, when $r_S$ is used by $S$, either indirectly by $\gamma > 0$ and $\ell' > 0$ or directly when $\alpha > 0$, then  $\load(r_S) > \load(r_T)$ and $\load(r_S) > \load(r)$. So $r_S$ is the bottleneck, and the solver will find the edge $I \to r_S$.
\end{proof}

Note that this proof still stands when $S$ and $I$ use several resources that indirectly contribute to $r_S$, as it is be equivalent to a bigger value of $\gamma$.

\fi

\end{document}